%% file: main.tex
\pdfoutput=1
\newif\ifarxiv
\arxivtrue

\ifarxiv
    \documentclass[acmsmall,screen,nonacm]{acmart}
\else
    \documentclass[acmsmall,screen]{acmart}
\fi

\usepackage{stmaryrd}
    \makeatletter
    \newcommand{\llangle}{\mathopen{\langle\!\langle}}
    \newcommand{\rrangle}{\mathclose{\rangle\!\rangle}}
    \makeatother
\usepackage{mathpartir}
\usepackage{microtype}
\usepackage{mathtools}
\usepackage{multirow}
\usepackage{siunitx}
\usepackage[table]{xcolor}
\usepackage{csquotes}
\usepackage{booktabs}
\usepackage{subcaption}
\usepackage{wrapfig}
\usepackage{hyperref}
\usepackage[inline]{enumitem}
\usepackage[capitalize,nameinlink]{cleveref}
\usepackage{amsthm}
    \theoremstyle{plain}
        \newtheorem{theorem}{Theorem}
        \newtheorem{lemma}{Lemma}
    \theoremstyle{definition}
        
        \newtheorem{property}{Property}
        \crefname{property}{Property}{Properties}
        \Crefname{property}{Property}{Properties}
    \theoremstyle{remark}
        \newtheorem*{remark}{Remark}

\usepackage{xspace}
\usepackage{xparse}
\usepackage{acro}
    \acsetup{patch/maketitle=false}
    \input{acro}
\usepackage{tikz}
    \usetikzlibrary{fit, positioning, arrows.meta}
    \tikzstyle{n}=[fill=none, font=\tiny,node distance=7pt, inner sep=1pt,minimum width=3mm, minimum height=3mm]
    \tikzstyle{m}=[fill=none, font=\tiny,node distance=5pt and 4pt, inner sep=1pt,minimum width=3mm, minimum height=3mm]
    \tikzstyle{o}=[fill=none, font=\tiny,node distance=2pt and 8pt, inner sep=1pt,minimum width=3mm, minimum height=3mm]
    \tikzstyle{a}=[-{Latex[length=3pt]}, semithick, densely dotted]
\usepackage{pgfplots}
    \usepgfplotslibrary{groupplots,fillbetween}
    \pgfplotsset{compat=1.18}

\input{defs}
\input{lst}

\AtBeginDocument{
    \setlength{\textfloatsep}{8pt plus 2pt minus 2pt} 
    \setlength{\intextsep}{8pt plus 2pt minus 2pt}   
    \setlength{\floatsep}{6pt plus 2pt minus 2pt}    
    \setlength{\abovedisplayskip}{3pt}
    \setlength{\belowdisplayskip}{3pt}
    \setlength{\abovedisplayshortskip}{0pt}
    \setlength{\belowdisplayshortskip}{3pt}
}

\setcopyright{cc}
\setcctype{by}
\ifarxiv
\else
    \acmJournal{PACMPL}
    \acmYear{2026} \acmVolume{10} \acmNumber{PLDI} \acmArticle{208}
    \acmMonth{6} \acmDOI{10.1145/3808286}
    \acmSubmissionID{pldi26main-p239-p}
    \received{2025-11-14}
    \received[accepted]{2026-04-03}
\fi

\ccsdesc[500]{Software and its engineering~Compilers}
\ccsdesc[300]{Theory of computation~Program semantics}
\ccsdesc[300]{Theory of computation~Data structures design and analysis}

\keywords{static single assignment (SSA), dominance, higher-order intermediate representations, incremental free-variable analysis, persistent sets, nesting tree}

\begin{document}

\title{SSA without Dominance for Higher-Order Programs}

\author{Roland Leißa}
\orcid{0000-0002-2444-6782}
\affiliation{%
	\institution{University of Göttingen}
	\city{Göttingen}
	\country{Germany}
}
\email{roland.leissa@cs.uni-goettingen.de}

\author{Johannes Griebler}
\orcid{0009-0002-2147-9068}
\affiliation{%
	\institution{University of Göttingen}
	\city{Göttingen}
	\country{Germany}
}
\email{j.griebler@stud.uni-goettingen.de}

\input{abstract}

\maketitle
\acresetall
\input{intro}
\input{overview}
\input{sema}
\input{nest}
\input{trie}
\input{eval}
\input{relwork}
\input{concl}

\begin{acks}
    The authors thank the anonymous reviewers for their helpful feedback,
    Joachim Meyer for implementing MimIR's \texttt{regex} plugin,
    Guido Moerkotte for insights on link-cut trees,
    and Sebastian Hack and Michel Steuwer for their support of the MimIR project.
\end{acks}

\section*{Data-Availability Statement}

The artifact accompanying this paper---including all benchmarks, measurement data, the \lean mechanization of all lemmas and theorems, and the \mimir implementation used for the experiments---is publicly available on Zenodo~\cite{leissa_2026_19069679}.
The \mimir framework is developed at \url{https://mimir.github.io}, where all contributions from this paper have been integrated.

\bibliographystyle{ACM-Reference-Format}
\bibliography{dblp,other}

\end{document}

%% file: acro.tex
\newcommand{\newacro}[2]{\DeclareAcronym{#1}{short=#1,long=#2}}

\newacro{AD}{automatic differentiation}
\newacro{AST}{abstract syntax tree}
\newacro{BST}{binary search tree}
\newacro{CFG}{control-flow graph}
\newacro{CFA}{control-flow analysis}
\newacro{CSE}{common subexpression elimination}
\newacro{DAG}{directed acyclic graph}
\newacro{DCE}{dead code elimination}
\newacro{DSL}{domain-specific language}
\newacro{GVN}{global value numbering}
\newacro{CPS}{continuation-passing style}
\newacro{LCSSA}{loop-closed SSA}
\newacro{JIT}{just-in-time}
\newacro{LoC}{lines of code}
\newacro{DOT}{dependent object types}
\newacro{NFA}{nondeterministic finite automaton}
\newacro{DFA}{deterministic finite automaton}
\newacro{ILP}{integer linear programming}
\newacro{SCC}{strongly connected component}
\newacro{SoN}{\enquote{Sea of Nodes}}

\DeclareAcronym{DTAL}{
    short = DTAL,
    long = Dependently Typed Assembly Language,
    cite = DBLP:conf/icfp/XiH01,
}

\DeclareAcronym{ArBB}{
    short = ArBB,
    long = Array Building Blocks,
    cite = DBLP:conf/cgo/NewburnSLMGTWDCWGLZ11,
}

\DeclareAcronym{CC}{
    short             = CC,
    long              = the Calculus of Constructions,
    cite              = DBLP:journals/iandc/CoquandH88,
}

\DeclareAcronym{CTRE}{
    short             = CTRE,
    long              = Compile Time Regular Expression,
    cite              = CTRE,
}

\DeclareAcronym{FFI}{
    short = FFI,
    long = foreign function interface,
    short-indefinite = an,
}

\DeclareAcronym{IR}{
    short             = IR,
    long              = intermediate representation,
    short-indefinite  = an,
}

\DeclareAcronym{PCRE2}{
    short             = PCRE2,
    long              = Perl-compatible Regular Expressions 2,
    cite              = pcre2,
}

\DeclareAcronym{PTS}{
    short             = PTS,
    short-plural-form = PTS,
    long              = pure type system,
    cite              = Bar:93,
}

\DeclareAcronym{RegEx}{
    short = RegEx,
    long = regular expression,
    short-plural = es,
}

\DeclareAcronym{SSA}{
    short             = {SSA},
    long              = {static single assignment},
    short-indefinite  = an,
}

\DeclareAcronym{SSI}{
    short             = SSI,
    long              = static single information,
    short-indefinite  = an,
}

\DeclareAcronym{SCCP}{
    short             = SCCP,
    long              = sparse conditional constant propagation,
    cite              = DBLP:conf/popl/WegmanZ85,
}

\DeclareAcronym{QTT}{
    short             = QTT,
    long              = Quantitative Type Theory,
    cite              = DBLP:conf/lics/Atkey18,
}

%% file: defs.tex
\def\Lcal{\mathcal{L}}

\def\lamG{\ensuremath{\lambda_G}\xspace}
\def\int{\hl{\mathtt{int}}}
\def\bool{\hl{\mathtt{bool}}}
\newcommand*{\var}[1][\ell]{\hl{\mathsf{var}\,#1}}
\newcommand*{\Var}[1]{\hl{\mathsf{var}\,{\mathtt{#1}}}}
\newcommand*{\val}[1]{\mathsf{val}\ #1}
\newcommand*{\Let}[3]{\hl{\mathtt{let}\ \mathtt{#1} = #2\mathtt{;}\ #3}}
\newcommand*{\hl}[1]{{\color{ACMBlue}#1}}

\newcommand*{\fv}[2][\hl{P}]{\mathit{fv}_{#1}(#2)}
\newcommand*{\FV}[2][\hl{P}]{\mathit{FV}_{#1}(#2)}
\newcommand*{\LV}[1]{\mathit{LV}(#1)}
\newcommand*{\LF}[1]{\mathit{LF}(#1)}
\newcommand*{\UF}[2][\hl{P}]{\mathit{UF}_{#1}(#2)}
\newcommand*{\dom}[1]{\mathit{dom}(#1)}
\def\unset{\!\!\uparrow}
\def\wf{\vdash_{\textnormal{wf}}}

\newcommand*{\inest}[2][\hl{P}]{\mathit{inest}_{#1}(#2)}

\NewDocumentCommand{\lam}{m O{\bot} m}{\ensuremath{\lambda #1 \to #2.\,#3}}

\newcommand{\irule}[3]{\mbox{\hypertarget{rule:#1}{\textsc{#1}}\inferrule{#2}{#3}}}

\newcommand{\iref}[1]{\hyperlink{rule:#1}{\textsc{#1}}}
\newcommand{\irefrule}[1]{\hyperlink{rule:#1}{Rule~\textsc{#1}}}

\usepackage{calc}
    
    \newcommand{\phantomr}[2]{\makebox[\widthof{#1}][r]{#2}}

\newcommand{\toolname}[1]{\textsc{#1}\@\xspace}
\newcommand{\mimir}{\toolname{MimIR}}
\newcommand{\llvm}{\toolname{LLVM}}
\newcommand{\mlir}{\toolname{MLIR}}
\newcommand{\ocaml}{\toolname{OCaml}}
\newcommand{\immer}{\toolname{immer}}
\newcommand{\lean}{\toolname{Lean}}

\newcommand{\thorin}{\toolname{Thorin}}
\newcommand{\guile}{\toolname{Guile}}

%% file: lst.tex
\usepackage{listings}
\usepackage{lstautogobble}
\newcommand{\tm}[1]{\tikz[remember picture,overlay]\coordinate (#1);}

\definecolor[named]{codegreen}    {named}{ACMGreen}
\definecolor[named]{codered}      {named}{ACMRed}
\definecolor[named]{codelightblue}{named}{ACMLightBlue}
\definecolor[named]{codedarkblue} {named}{ACMDarkBlue}
\definecolor[named]{whitesmoke}   {rgb}{0.96,0.96,0.96}

\def\lst{\lstinline}

\lstdefinelanguage{lamg}{
    literate=%
        {TO}{{$\mapsto$}}1
        {->}{{$\shortrightarrow$}}1
        {FORALL}{{$\color{ACMDarkBlue}\forall$}}1
        {BOT}{{$\color{ACMDarkBlue}\bot$}}1
        {LAM}{{${\color{ACMDarkBlue}\lambda}$}}1,
    morecomment = [s]{/*}{*/},
    morecomment = [l]{//},
    sensitive = true,
    keywords = [1]{int,bool,let},
    keywords = [2]{br},
}

\lstdefinelanguage{ssa}{
    literate=%
        {PHI}{{$\phi$}}1
        {->}{{$\shortrightarrow$}}1,
    morecomment = [s]{/*}{*/},
    morecomment = [l]{//},
    sensitive = true,
    keywords = [1]{int,bool},
    keywords = [2]{br,goto,return,fn,let},
}

%% file: abstract.tex
\begin{abstract}
    Dominance is a fundamental concept in compilers based on \ac{SSA} form.
    It underpins a wide range of analyses and transformations and defines a core property of \ac{SSA}:
    every use must be dominated by its definition.
    We argue that this reliance on dominance has become increasingly problematic---both in terms of precision and applicability to modern higher-order languages.
    First, control flow overapproximates data flow, which makes dominance-based analyses inherently imprecise.
    Second, dominance is well-defined only for first-order \acp{CFG}.
    More critically, higher-order programs violate the assumptions underlying \ac{SSA} and classic \acp{CFG}:
    without an explicit \ac{CFG}, the very notion that all uses of a variable must be dominated by its definition loses meaning.

    We propose an alternative foundation based on free variables.
    In this view, $\phi$-functions and function parameters directly express data dependencies, enabling analyses traditionally built on dominance while improving precision and naturally extending to higher-order programs.
    We further present an efficient technique for maintaining free-variable sets in a mutable \ac{IR}.
    For analyses requiring additional structure, we introduce the \emph{nesting tree}---a relaxed analogue of the dominator tree constructed from variable dependencies rather than control flow.

    Our benchmarks demonstrate that the algorithms and data structures presented in this paper scale log-linearly with program size in practice.
\end{abstract}

%% file: intro.tex
\section{Introduction}
\label{sec:intro}

Modern compilers rely on well-structured \acp{IR} to reason about programs.
Traditional \acp{IR} in \ac{SSA} form express control flow explicitly via a \ac{CFG} and use dominance to relate each use of a variable to its definition:
\begin{property}[Dominance]\label{prop:dom}
    Each use of a variable is dominated by its definition.
\end{property}
While this works well for first-order programs, dominance is inherently tied to \acp{CFG}.
Dominance becomes imprecise or even undefined for higher-order programs.
Yet higher-order functions are ubiquitous in modern code: in functional languages, in functional features of imperative languages like C++ and Java, and in data-parallel frameworks such as MapReduce or tensor computations.

\begin{wrapfigure}{r}{.3\textwidth}
\ifarxiv
\vspace{-.5ex}
\else
\vspace{1ex}
\fi
\begin{lstlisting}[basicstyle=\ttfamily\footnotesize,xleftmargin=1ex,xrightmargin=1ex,language=Caml]
let f x =
    let g y = y in
    g x
\end{lstlisting}
\vspace{-2ex}
\caption{\lst|g| does not depend on \lst|f|.}
\vspace{-2ex}
\label{fig:beta}
\end{wrapfigure}
$\lambda$-calculi handle higher-order programs naturally through block nesting:
variables are explicitly scoped by their syntactic structure.
However, this syntactic nesting can be too rigid for program transformations.
For example, $\beta$-reducing the application \lst|f a| in the \ocaml code in \cref{fig:beta} duplicates \lst|g|, although \lst|g| does not depend on \lst|f|'s variable \lst|x|.
For this reason, some \acp{IR}, such as \thorin~\cite{DBLP:conf/cgo/LeissaKH15}, \enquote{CPS soup} in \guile~\cite{wingolog2023approaching}, and MimIR~\cite{DBLP:journals/pacmpl/LeissaUMH25}, have abolished explicit scoping to simplify program transformations.
However, these approaches currently lack a formal metatheory and efficient, principled techniques for organizing a higher-order \enquote{soup of functions}, in particular for reconstructing explicit scope nesting.

This paper introduces \lamG, a graph-based $\lambda$-calculus for compiler \acp{IR}.
Like basic blocks in \ac{SSA}-based \acp{IR}, but unlike traditional $\lambda$-calculi, functions are \enquote{floating in a soup} and are \emph{not} explicitly nested.
Like traditional $\lambda$-calculi but unlike \ac{SSA}-based \acp{IR}, \lamG naturally supports higher-order functions.
Conceptually, \lamG replaces the CFG-based notion of dominance with a more relaxed relation of nesting:
instead of asking \enquote{Is $A$ dominated by $B$?}, we ask \enquote{Is $A$ nested within $B$?}.
This paper presents \lamG from two complementary perspectives—namely, as a \emph{mathematical calculus} and as a \emph{practical compiler IR} featuring efficient algorithms.

\subsection{Contributions}

In summary, this paper makes the following contributions:
\begin{itemize}[nosep,left=0pt .. \parindent]
    \item We present \lamG, a graph-based, typed $\lambda$-calculus that unifies functions and their bound variables through labels.
          It supports partially specified (unset) bodies and expresses (mutual) recursion through the graph structure rather than an explicit fixed-point operator.
          \lamG subsumes \acp{CFG} in \ac{SSA} form and naturally extends to higher-order programs (\cref{sec:overview}).

    \item We introduce a novel framework that \emph{lazily computes free variables on demand} and \emph{caches the results} for subsequent queries.
          Compiler passes can freely mutate the program representation and query free variables at any time.
          The cache is \emph{automatically and locally invalidated} when mutations occur, using a lightweight marking scheme that propagates invalidation only through affected parts of the program.
          This requires no manual reanalysis or synchronization with an external tool (\cref{sec:fvs}).

    \item We formally define a \emph{nesting relation} that specifies when a function must be nested within another.
          We prove the central insight of this work: nesting is a relaxed form of dominance.
          Unlike dominance, nesting depends solely on free variables rather than control flow, making it well-defined for higher-order programs.
          We restate the classic \ac{SSA}-dominance \cref{prop:dom} using nesting (\cref{sec:wf}).

    \item We propose a $\beta$-reduction algorithm for \lamG that specializes expressions only when necessary.
          Unlike conventional $\lambda$-calculi, our approach requires no block floating to hoist independent expressions out of functions.
          Furthermore, unlike scopeless representations such as \acp{CFG}, it avoids dominator trees and other control-flow preprocessing (\cref{sec:subst,sec:beta}).

    \item We generalize critical edge elimination for higher-order functions through $\eta$-expansion.
          While $\eta$-expansion itself is standard, to our knowledge this connection between the two techniques has not been explored before (\cref{sec:eta}).

    \item We introduce the \emph{nesting tree}, a relaxed alternative to the dominator tree that explicitly reconstructs the nesting structure of \lamG.
          Like nesting, it extends naturally to higher-order programs since its computation is independent of control flow.
          The nesting tree can optionally be enriched with information about loops and recursion (\cref{sec:nest_tree}).

    \item We present an immutable data structure supporting efficient persistent set operations, enabling scalable maintenance of free-variable sets (\cref{sec:trie}).

    \item We implemented the concepts presented in this paper in \mimir~\cite{DBLP:journals/pacmpl/LeissaUMH25}.
          Our benchmarks demonstrate that the proposed algorithms and data structures scale log-linearly with program size on realistic workloads (\cref{sec:eval}).
\end{itemize}
All lemmas and theorems in this paper have been verified with the proof assistant \lean.

%% file: overview.tex
\section{Overview}
\label{sec:overview}

This section first reviews classic \ac{SSA} and \acp{CFG}.
Then, we discuss \lamG and how to translate \iac{SSA}-based \ac{CFG} to~\lamG.
Finally, we present core ideas that are explored in more detail in later sections.

\subsection{CFG/SSA}

As an introductory example, consider the nested loops in the SSA-form \ac{CFG} in \cref{fig:loops:ssa}.
There is an outer loop with header \lst|hi|, exit block \lst|xi|, and body \lst|bi| that branches to the header \lst|hj| of the inner loop.
This inner loop, in turn, exits via \lst|xj| and has a body \lst|bj|.

Suppose we want to peel or unroll the outer loop.
To peel it,   we must specialize \lst|i1| to \lst|0|  (the first  argument of \lst|i1|'s $\phi$-function);
to unroll it, we must specialize \lst|i1| to \lst|j2| (the second argument of the $\phi$-function).
In both cases, specialization  must apply not only to \lst|hi| but also to all basic blocks that depend on \lst|i1|.
Compilers such as \llvm typically use the dominator tree of the \ac{CFG} (\cref{fig:loops:dom}) to identify these blocks, specializing all nodes dominated by the loop header.
In this example, all nodes except \lst|f| would be specialized.
However, note that the inner loop is independent of \lst|i1|.
Therefore, it is sufficient to specialize only \lst|hi|, \lst|bi|, and \lst|xi|.
This illustrates how control-flow information can overapproximate actual data dependencies (see also discussion of liveness in \cref{sec:relwork}).

\input{fig/syntax}
\input{fig/loops}
\input{fig/fv}

\subsection[From SSA to Lambda-G]{From SSA to \lamG}

We now translate this program into \lamG~(\cref{fig:loops:lam}).
A \lamG program~(\cref{fig:syntax}) consists of \emph{labels} mapped to \emph{functions}.
Each function is defined by a \emph{type} and a \emph{body}.
The body is either an \emph{expression} or \emph{unset}, denoted by the symbol~$\hl{\uparrow}$.
This enables partial or incremental graph construction.
For instance, to construct the cyclic program in \cref{fig:loops:lam}, we must \enquote{tie the knot}.
A straightforward approach is to first create all functions with unset bodies and then fill them in.
We also allow the program graph to be mutated---either by introducing new label-to-function mappings or by modifying the body of an existing function.

Each function introduces a variable identified by the same label as the function itself.
An expression may use $\hl{\ell}$ to refer to the function and $\var$ to refer to its variable.
The variable refers to the argument value bound upon application of the function, as in the standard $\lambda$‑calculus.
Since a program is a flat map from labels to functions, functions may reference variables defined in other functions \emph{without explicit lexical scoping}, as in a \ac{CFG}.
For example, \lst|bi| uses the variable \lst|i1 = $\Var{hj}$|.

\lamG's type system includes integers, booleans, tuple types, function types, and $\hl{\bot}$ (bottom)---a type with no inhabitants.
Functions whose codomain is~$\hl{\bot}$ are called \emph{continuations}.
Function application, tuple construction \& extraction, and \lst|let|-expressions---which allow for arbitrarily complex expressions---follow conventional syntax.
\lamG also includes constants for integers and booleans, as well as built-in functions, some of which may be written in infix form for readability.
In particular, we use a function \lst{br$_{T}$} of type \lst{[bool, [] -> T, [] -> T] -> T} for a conditional branch yielding a value of type $T$.
Thus, \lst{br$_{\bot}$(cond, t, f)} is like a conditional branch in \llvm or similar low-level \acp{IR}.

Using the correspondence between \ac{SSA} and \ac{CPS}~\cite{DBLP:journals/sigplan/Appel88,DBLP:conf/irep/Kelsey95}, we translate all basic blocks in \cref{fig:loops:ssa} with continuations in \cref{fig:loops:lam}.
$\phi$-Functions are represented as function variables, while their arguments become arguments to the application of the respective continuation, and the return point is made explicit:
instead of an implicit \lst|return|-terminator as in \ac{SSA}, the continuation \lst|ret| is passed to \lst|f|, which \lst|xi| invokes to exit \lst|f|.
This streamlining (treating basic blocks with $\phi$-functions as functions) allows us to implement loop peeling/unrolling as $\beta$-reduction (inlining):
We achieve peeling by $\beta$-reducing the call \lst|hi 0| and unrolling by $\beta$-reducing \lst|hi j2|.

\subsection{Nesting Tree}

As we have already argued, dominance only makes sense for first-order \acp{CFG}.
So how shall we identify which functions depend on \lst|hi| and need to be specialized as well?
For this reason, we present the \emph{nesting tree} depicted in \cref{fig:loops:nest} as an alternative to the dominator tree.
We say \enquote{$\ell_1$ nests $\ell_2$} if $\Var{\ell_1}$ occurs free in $\ell_2$ (or transitively).
Nesting implies dominance (\cref{thm:dom}) but not necessarily vice versa.
The nesting tree is not only well-defined for higher-order programs---unlike the dominator tree---but is also more precise since it is constructed from the free variables of the functions~(see \cref{fig:fvs} and \cref{sec:fvs}).
The nesting tree correctly captures that the inner loop is independent from the outer loop by placing them side by side.
Thus, $\beta$-reducing \lst|hi| will only specialize \lst|hi|, \lst|bi|, and \lst|xi|.
The nesting tree can be enhanced with \emph{sibling dependencies} (the dotted arrows in \cref{fig:loops:nest}) that track how functions at the same level depend on each other (see \cref{sec:siblings}).

\newcommand{\dep}{\mathrel{\tikz[baseline=-0.5ex]\draw[densely dotted,-{Latex[length=3pt]}](0,0)--(0.25,0);}}
Suppose we want to slightly change the program as indicated in the highlighted part in \cref{fig:loops:lam2}.
Note that applying this change to the SSA-form program in \cref{fig:loops:ssa} does \emph{not} change the dominator tree in \cref{fig:loops:dom}.
However, the nesting tree \emph{does} change as the inner loop now depends on the outer loop (see \cref{fig:loops:nest2}).
The nesting tree closely resembles the dominator tree, except that \lst|hj| hangs directly under \lst|hi|.
The reason for this is that \lst|bi| does not introduce any variables that \lst|hj| depends on.

\subsection{Higher-Order Functions and Direct Style}

\input{fig/concl}
\input{fig/iter}

\lamG not only subsumes classical \acp{CFG} in \ac{SSA} form, but also represents higher‑order functions in both \ac{CPS} and direct style.
\Cref{fig:iter:iter} depicts a \lamG program that computes $f^n(x)$ in direct style.
\Cref{fig:iter:pow} successively builds upon \lst|iter| to construct a \mbox{\lst|pow|er} function via currying and partial application.
For example, \lst|pow 3 5| evaluates to \lst|243|.
\Cref{fig:iter:nest} presents the resulting nesting forest of the program.

This example illustrates a qualitative advantage of \lamG over \ac{CFG}-based \acp{IR}:
\lamG can directly express higher-order functions like \lst|iter|.
\acp{CFG} cannot directly represent this program without further lowering.
The nesting forest automatically recovers the dependency structure from free variables alone---without any explicit lexical scoping.

\subsection{Summary}

While the nesting tree is useful for making dependencies between functions explicit, \emph{an explicit construction is rarely necessary}.
In most cases, free-variable information suffices, as in the dependency-aware substitution described in \cref{sec:subst}.

This paper explores the relationship between \ac{SSA} form and functional programming, and discusses further generalizations (summarized in \cref{tbl:concl}) connecting \ac{SSA} and \lamG.
This relationship is not a strict mathematical isomorphism, but rather illustrates how key concepts in \ac{SSA} correspond to those in \lamG.
In particular, while free variables correspond to liveness in \ac{SSA} (see \cref{sec:relwork}), in \cref{tbl:concl} they serve as the primary structural notion in place of dominance.

%% file: fig/syntax.tex
\begin{figure}[p]
	\footnotesize
	\begin{align*}
		\hl{P} & : \hl{\Lcal}  \to \hl{f}                                                                                             &  & \text{program: maps labels to functions}                 \\
		\hl{f} & \coloneqq \hl{\lam{t}[t]{b}}                                                                                         &  & \text{function}                                          \\
		\hl{b} & \coloneqq \hl{e} \mid \hl{\uparrow}                                                                                  &  & \text{body: expr/unset}                                  \\
		\hl{e} & \coloneqq \hl{c} \mid \hl{\ell} \mid \var{} \mid \hl{e\ e} \mid \hl{(e, \ldots, e)} \mid \hl{e.i} \mid \Let{x}{e}{e} \mid \hl{\mathtt{x}} &  & \text{expr: constant/function/var/app/tuple/extract/let} \\
		\hl{t} & \coloneqq \int \mid \bool \mid \hl{\bot} \mid \hl{t \to t} \mid \hl{[t, \ldots, t]}                                  &  & \text{type: int/bool/bottom/function type/tuple type}
	\end{align*}
	\vspace{-5ex}
	\caption{Syntax of \lamG}
	\vspace{-2ex}
	\label{fig:syntax}
\end{figure}

%% file: fig/loops.tex
\begin{figure}[p]
	\begin{subcaptionblock}{.19\textwidth}
		\begin{lstlisting}[basicstyle=\ttfamily\tiny,xleftmargin=1ex,xrightmargin=1ex,language=ssa,framerule=0pt,belowskip=2.5pt]
            $\tm{fstart}$int f(int n)

              goto hi         $\tm{fend}$

            $\tm{histart}$hi:
              i1 = PHI(0, j2)
              br(i1<n, bi, xi)$\tm{hiend}$

            $\tm{bistart}$bi:
              goto hj         $\tm{biend}$

            $\tm{hjstart}$hj:
              j1 = PHI(i1, j2)
              j2 = j1 + 1
              br(j1<n, bj, xj)$\tm{hjend}$

            $\tm{bjstart}$bj:
              goto hj         $\tm{bjend}$

            $\tm{xjstart}$xj:

              goto hi         $\tm{xjend}$

            $\tm{xistart}$xi:
              return i1       $\tm{xiend}$
        \end{lstlisting}
		\vspace{-1ex}
		\caption{SSA/CFG}
		\label{fig:loops:ssa}
	\end{subcaptionblock}
	\begin{subcaptionblock}{.245\textwidth}
		\begin{lstlisting}[basicstyle=\ttfamily\tiny,xleftmargin=1ex,xrightmargin=1ex,language=lamg]
            f TO LAM[int, int -> BOT] -> BOT.
              let n   = $\Var{f}$.0;
              let ret = $\Var{f}$.1;
              hi 0

            hi TO LAM int -> BOT.
              let i1 = $\Var{hi}$;
              br$_\bot$(i1<n, bi, xi)

            bi TO LAM[] -> BOT.
              hj i1

            hj TO LAM int -> BOT.
              let j1 = $\Var{hj}$;
              let j2 = j1 + 1;
              br$_\bot$(j1<n, bj, xj)

            bj TO LAM[] -> BOT.
              hj j2

            xj TO LAM[] -> BOT.

              hi j2

            xi TO LAM[] -> BOT.
              ret i1
        \end{lstlisting}
		\vspace{-1ex}
		\caption{\lamG}
		\label{fig:loops:lam}
	\end{subcaptionblock}
	\begin{subcaptionblock}{.245\textwidth}
		\begin{lstlisting}[basicstyle=\ttfamily\tiny,xleftmargin=1ex,xrightmargin=1ex,language=lamg]
            f TO LAM[int, int -> BOT] -> BOT.
              let n   = $\Var{f}$.0;
              let ret = $\Var{f}$.1;
              hi 0

            hi TO LAM int -> BOT.
              let i1 = $\Var{hi}$;
              br$_\bot$(i1<n, bi, xi)

            bi TO LAM[] -> BOT.
              hj i1

            hj TO LAM int -> BOT.
              let j1 = $\Var{hj}$;
              let j2 = j1 + 1;
              br$_\bot$(j1<n, bj, xj)

            bj TO LAM[] -> BOT.
              hj j2

            xj TO LAM[] -> BOT.
              $\color{ACMRed}\texttt{\textbf{let} i2 = i1 + j1;}$
              hi $\color{ACMRed}\texttt{i2}$

            xi TO LAM[] -> BOT.
              ret i1
        \end{lstlisting}
		\vspace{-1ex}
		\caption{\lamG w/ \texttt{i2}}
		\label{fig:loops:lam2}
	\end{subcaptionblock}
	\begin{subcaptionblock}{.25\textwidth}
		\centering
		\begin{tikzpicture}
			\node[n] (f)                      { \texttt{f}  };
			\node[n] (hi) [below      =of  f] { \texttt{hi} };
			\node[n] (bi) [below  left=of hi] { \texttt{bi} };
			\node[n] (xi) [below right=of hi] { \texttt{xi} };
			\node[n] (hj) [below      =of bi] { \texttt{hj} };
			\node[n] (bj) [below  left=of hj] { \texttt{bj} };
			\node[n] (xj) [below right=of hj] { \texttt{xj} };

			\draw (f)   -- (hi);
			\draw (hi)  -- (bi);
			\draw (hi)  -- (xi);
			\draw (bi)  -- (hj);
			\draw (hj)  -- (bj);
			\draw (hj)  -- (xj);
		\end{tikzpicture}
		\vspace{-1ex}
		\caption{Dominator tree of \ref{fig:loops:ssa}}
		\label{fig:loops:dom}
		\begin{tikzpicture}
			\node[n] (f) { \texttt{f} };
			\node[n] (hi) [below  left=of f] { \texttt{hi} };
			\node[n] (hj) [below right=of f] { \texttt{hj} };
			\node[n] (xi) [below=of hi] { \texttt{xi} };
			\node[n] (xj) [below=of hj] { \texttt{xj} };
			\node[n] (bi) [ left=of xi] { \texttt{bi} };
			\node[n] (bj) [right=of xj] { \texttt{bj} };

			\draw (f)  -- (hi);
			\draw (f)  -- (hj);
			\draw (hi)  -- (bi);
			\draw (hi)  -- (xi);
			\draw (hj)  -- (bj);
			\draw (hj)  -- (xj);

			\draw[a] (hj) to[loop right] (hj);
			\draw[a] (hj) to[bend right] (hi);
			\draw[a] (hi) to[bend right] (hj);
		\end{tikzpicture}
		\caption{Nesting tree of \ref{fig:loops:lam}}
		\label{fig:loops:nest}
		\begin{tikzpicture}
			\node[n] (f)                      { \texttt{f}  };
			\node[n] (hi) [below      =of  f] { \texttt{hi} };
			\node[n] (bi) [below      =of hi] { \texttt{bi} };
			\node[n] (xi) [right      =of bi] { \texttt{xi} };
			\node[n] (hj) [ left      =of bi] { \texttt{hj} };
			\node[n] (bj) [below  left=of hj] { \texttt{bj} };
			\node[n] (xj) [below right=of hj] { \texttt{xj} };

			\draw (f)   -- (hi);
			\draw (hi)  -- (bi);
			\draw (hi)  -- (xi);
			\draw (hi)  -- (hj);
			\draw (hj)  -- (bj);
			\draw (hj)  -- (xj);

			\draw[a] (hi) to[loop right] (hi);
			\draw[a] (hj) to[loop  left] (hj);
			\draw[a] (bi) to (hj);
		\end{tikzpicture}
		\caption{Nesting tree of \ref{fig:loops:lam2}}
		\label{fig:loops:nest2}
	\end{subcaptionblock}

    \vspace{1ex}
	\begin{subcaptionblock}{.48\textwidth}
        \vspace{1ex}\centering
\begin{lstlisting}[language=Caml,basicstyle=\ttfamily\tiny,morekeywords=int]
let f ((n: int), (ret: int -> 'bot)): 'bot =
  let rec hi i1 =
    let bi () = hj  i1 in
    let xi () = ret i1 in
    (if i1<n then bi else xi) ()
  and hj(j1: int) =
    let j2 = j1 + 1 in
    let bj() = hj j2 in
    let xj() = hi j2 in
    (if j1<n then bj else xj) ()
  in hi 0
\end{lstlisting}
		\vspace{-1ex}
		\caption{\cref{fig:loops:ssa}/\subref{fig:loops:lam} in \ocaml}
		\label{fig:loops:ml}
	\end{subcaptionblock}
    \hspace{1ex}
	\begin{subcaptionblock}{.48\textwidth}
        \vspace{1ex}
		\centering
\begin{lstlisting}[language=Caml,basicstyle=\ttfamily\tiny,morekeywords=int]
let f ((n: int), (ret: int -> 'bot)): 'bot =
  let rec hi i1 =
    let rec hj(j1: int) =
      let j2 = j1 + 1 in
      let bj() = hj j2 in
      let xj() = let i2 = i1 + j1 in hi i2 in
      (if j1<n then bj else xj) () in
    let bi () = hj  i1 in
    let xi () = ret i1 in
    (if i1<n then bi else xi) ()
  in hi 0
\end{lstlisting}
		\vspace{-1ex}
		\caption{\cref{fig:loops:lam2} in \ocaml}
		\label{fig:loops:ml2}
	\end{subcaptionblock}
	\begin{tikzpicture}[remember picture, overlay,
		block/.style={shape=rectangle, draw, semithick, inner xsep=0pt, inner ysep=3.2pt, yshift=1.7pt, fill=none, fit={(#1start.north east) (#1end.south west)}},
		arrow/.style={-{Latex[length=3pt]}, semithick}
		]
		\node[block=f]  (F) {};
		\node[block=hi] (HI) {};
		\node[block=hj] (HJ) {};
		\node[block=bi] (BI) {};
		\node[block=bj] (BJ) {};
		\node[block=xj] (XJ) {};
		\node[block=xi] (XI) {};

		\draw[arrow] (F.south)  -- (HI.north);
		\draw[arrow] (HI.south) -- (BI.north);
		\draw[arrow] (BI.south) -- (HJ.north);
		\draw[arrow] (HJ.south) -- (BJ.north);
		\draw[arrow] (HI.west) -- ++(-8pt,0pt) |- (XI.west);
		\draw[arrow] (HJ.west) -- ++(-4pt,0pt) |- (XJ.west);
		\draw[arrow] (BJ.east) -- ++(+4pt,0pt) |- (HJ.east);
		\draw[arrow] (XJ.east) -- ++(+8pt,0pt) |- (HI.east);
	\end{tikzpicture}
	\vspace{-2ex}
	\caption{Two nested loops. Inner loop (\texttt{j1}/\texttt{j2}) only depends on outer loop (\texttt{i1}) in \ref{fig:loops:lam2} by \texttt{i2}.}
	\label{fig:loops}
\end{figure}

%% file: fig/fv.tex
\begin{table}[p]
	\caption{Computations and invalidation of free variables (\lst|m| = \lst|mark|)}
	\label{fig:fvs}\vspace{-2ex}
	\resizebox{\textwidth}{!}{%
		{\tiny\setlength{\tabcolsep}{1pt}%
				\begin{tabular}{crrcrcrcrcrcrcrc} \toprule
					 & \multirow{3}{*}{\rotatebox{90}{\lst|run|}} & \multicolumn{2}{c}{\texttt{f}} & \multicolumn{2}{c}{\texttt{hi}} & \multicolumn{2}{c}{\texttt{bi}} & \multicolumn{2}{c}{\texttt{hj}} & \multicolumn{2}{c}{\texttt{bj}} & \multicolumn{2}{c}{\texttt{xj}} & \multicolumn{2}{c}{\texttt{xi}}                                                                                                                                                           \\\cmidrule(lr){3-4}\cmidrule(lr){5-6}\cmidrule(lr){7-8}\cmidrule(lr){9-10}\cmidrule(lr){11-12}\cmidrule(lr){13-14}\cmidrule(lr){15-16}
					 &                                            & \lst|m|                        & $\mathit{FV}$                   & \lst|m|                         & $\mathit{FV}$                   & \lst|m|                         & $\mathit{FV}$                   & \lst|m|                         & $\mathit{FV}$           & \lst|m| & $\mathit{FV}$                     & \lst|m| & $\mathit{FV}$                     & \lst|m| & $\mathit{FV}$           \\\midrule
					\multirow{5}{*}{\rotatebox{90}{\cref{fig:loops:lam}}}
					 & 0                                          & 0                              & $\varnothing$                   & 0                               & $\varnothing$                   & 0                               & $\varnothing$                   & 0                               & $\varnothing$           & 0       & $\varnothing$                     & 0       & $\varnothing$                     & 0       & $\varnothing$           \\
					 & 2                                          & 0                              & $\varnothing$                   & 0                               & $\varnothing$                   & 0                               & $\varnothing$                   & 0                               & $\varnothing$           & 0       & $\varnothing$                     & 0       & $\varnothing$                     & 2       & $\{\Var{f}, \Var{hi}\}$ \\\cmidrule(lr){2-16}
					 & 4                                          & 4                              & $\varnothing$                   & 4                               & $\{\Var{f}\}$                   & 4                               & $\{\Var{f}, \Var{hi}\}$         & 4                               & $\{\Var{f}          \}$ & 4       & $\{\Var{hj}          \}$          & 4       & $\{\Var{hj}          \}$          & 2       & $\{\Var{f}, \Var{hi}\}$ \\
					 & 5                                          & 5                              & $\varnothing$                   & 5                               & $\{\Var{f}\}$                   & 5                               & $\{\Var{f}, \Var{hi}\}$         & 5                               & $\{\Var{f}          \}$ & 5       & $\{\Var{f}, \Var{hj}          \}$ & 5       & $\{\Var{f}, \Var{hj}          \}$ & 2       & $\{\Var{f}, \Var{hi}\}$ \\
					 & 6                                          & 6                              & $\varnothing$                   & 6                               & $\{\Var{f}\}$                   & 6                               & $\{\Var{f}, \Var{hi}\}$         & 6                               & $\{\Var{f}          \}$ & 6       & $\{\Var{f}, \Var{hj}          \}$ & 6       & $\{\Var{f}, \Var{hj}          \}$ & 2       & $\{\Var{f}, \Var{hi}\}$ \\\midrule
					 & 6                                          & 0                              & $\varnothing$                   & 0                               & $\varnothing$                   & 0                               & $\varnothing$                   & 0                               & $\varnothing$           & 0       & $\varnothing$                     & 0       & $\varnothing$                     & 2       & $\{\Var{f}, \Var{hi}\}$ \\\midrule
					\multirow{3}{*}{\rotatebox{90}{\cref{fig:loops:lam2}}}
					 & 8                                          & 8                              & $\varnothing$                   & 8                               & $\{\Var{f}\}$                   & 8                               & $\{\Var{f}, \Var{hi}\}$         & 8                               & $\{\Var{f}, \Var{hi}\}$ & 8       & $\{\Var{hj}\}$                    & 8       & $\{\Var{hj}, \Var{hi}\}$          & 2       & $\{\Var{f}, \Var{hi}\}$ \\
					 & 9                                          & 9                              & $\varnothing$                   & 9                               & $\{\Var{f}\}$                   & 9                               & $\{\Var{f}, \Var{hi}\}$         & 9                               & $\{\Var{f}, \Var{hi}\}$ & 9       & $\{\Var{f}, \Var{hi}, \Var{hj}\}$ & 9       & $\{\Var{f}, \Var{hi}, \Var{hj}\}$ & 2       & $\{\Var{f}, \Var{hi}\}$ \\
					 & 10                                         & 10                             & $\varnothing$                   & 10                              & $\{\Var{f}\}$                   & 10                              & $\{\Var{f}, \Var{hi}\}$         & 10                              & $\{\Var{f}, \Var{hi}\}$ & 10      & $\{\Var{f}, \Var{hi}, \Var{hj}\}$ & 10      & $\{\Var{f}, \Var{hi}, \Var{hj}\}$ & 2       & $\{\Var{f}, \Var{hi}\}$ \\\bottomrule
				\end{tabular}%
			}%
	}
\end{table}

%% file: fig/concl.tex
\colorlet{group}{gray!25}%
\colorlet{group2}{gray!50}%
\begin{table}[t]
	\caption{SSA/CFG concepts and their analogues in \lamG}\vspace{-2ex}%
	\label{tbl:concl}%
    \footnotesize\centering\hspace{1ex}%
    \begin{minipage}[t]{0.45\textwidth}%
	\begin{tabular}{ll}
		\toprule
		CFG w/ SSA                   & $\lamG$                                      \\
		\midrule
        \rowcolor{group}Function    &                                              \\
		\rowcolor{group}Basic block & \multirow{-2}{*}{\cellcolor{group}Function}  \\
		Function parameter           & \multirow{2}{*}{$\lambda$-Variable}          \\
		$\phi$-Function                                                             \\
		\rowcolor{group}Function call &                                            \\
		\rowcolor{group}Jump  & \multirow{-2}{*}{\cellcolor{group}Application}     \\
		Function call argument      & \multirow{2}{*}{Application argument}          \\
		$\phi$-Argument                                                              \\
		\rowcolor{group}Function inlining           &              \\
		\rowcolor{group}Loop peeling                &                                                \\
		\rowcolor{group}Loop unrolling              &                                                \\
		\rowcolor{group}CFG specialization          & \multirow{-4}{*}{\cellcolor{group}$\beta$-Reduction}      \\
		\bottomrule
	\end{tabular}
    \end{minipage}\hfill%
    \begin{minipage}[t]{0.5\textwidth}
	\begin{tabular}{ll}
		\toprule
		CFG w/ SSA                  & $\lamG$                                        \\\midrule
		\rowcolor{group}SSA dominance property      & Well-Formedness                                \\
		Dominance                   & Free variables                                  \\
		\rowcolor{group}Dominator tree              & Nesting tree                                   \\
		\rowcolor{group2}DJ-Graph                   & \\
		\rowcolor{group2}Call graph                 & \multirow{-2}{*}{\cellcolor{group2}\quad w/ sibling dependencies}\\
		\rowcolor{group}Loop Tree                   &                                                  \\
		\rowcolor{group}Call graph SCCs             & \multirow{-2}{*}{\cellcolor{group}\quad\quad w/ SCCs}             \\
		Directly recursive function & \multirow{2}{*}{Direct recursion   }           \\
		Natural loop                                                                 \\
		\rowcolor{group}Mutually recursive function  & \\
		\rowcolor{group}Irreducible loop            & \multirow{-2}{*}{\cellcolor{group}Mutual recursion}              \\
		Critical edge elimination   & $\eta$-Expansion                               \\
		\bottomrule
	\end{tabular}
    \end{minipage}\hspace{1ex}
\end{table}

%% file: fig/iter.tex
\begin{figure}[t]
	\begin{subcaptionblock}{.34\textwidth}
		\begin{lstlisting}[basicstyle=\ttfamily\tiny,xleftmargin=1ex,xrightmargin=1ex,language=lamg]
            iter TO LAM[int -> int, int, int] -> int.
                let f = $\Var{iter}$.0;
                let n = $\Var{iter}$.1;
                let x = $\Var{iter}$.2;
                br$_\hl{\mathtt{int}}$(n <= 0, a, b)
            a TO LAM[] -> int. x
            b TO LAM[] -> int. iter (f, n - 1, f x)
        \end{lstlisting}
        \vspace{-1ex}
		\caption{\lst|iter| computes $f^n(x)$}
		\label{fig:iter:iter}
	\end{subcaptionblock}
	\begin{subcaptionblock}{.50\textwidth}
		\begin{lstlisting}[basicstyle=\ttfamily\tiny,xleftmargin=1ex,xrightmargin=1ex,language=lamg]
            succ TO LAM int -> int.       $\Var{succ}$ + 1
            add  TO LAM int -> int -> int. add'             // curried
            add' TO LAM int -> int.       iter (succ, $\Var{add}$, $\Var{add'}$)
            mul  TO LAM int -> int -> int. mul'             // curried
            mul' TO LAM int -> int.       iter (add $\Var{mul}$, $\Var{mul'}$, 0)
            pow  TO LAM int -> int -> int. pow'             // curried
            pow' TO LAM int -> int.       iter (mul $\Var{pow}$, $\Var{pow'}$, 1)
        \end{lstlisting}
        \vspace{-1ex}
		\caption{Successive use of \lst|iter| to build \lst|pow|er}
		\label{fig:iter:pow}
	\end{subcaptionblock}
	\begin{subcaptionblock}{.14\textwidth}
		\begin{tikzpicture}
			\node[n] (f)                    { \texttt{iter} };
			\node[n] (a) [below  left=of f] { \texttt{a} };
			\node[n] (b) [below right=of f] { \texttt{b} };
			\node[n]     [right =.25cm of f]      { \texttt{succ} };

			\draw (f)  -- (a);
			\draw (f)  -- (b);
		\end{tikzpicture}

        \vspace{.25cm}

		\begin{tikzpicture}
			\node[n] (add)                  { \texttt{add}  };
			\node[n] (addc) [below=of add]  { \texttt{add'} };
			\node[n] (mul)  [right=of add]  { \texttt{mul}  };
			\node[n] (mulc) [below=of mul]  { \texttt{mul'} };
			\node[n] (pow)  [right=of mul]  { \texttt{pow}  };
			\node[n] (powc) [below=of pow]  { \texttt{pow'} };

			\draw (add) -- (addc);
			\draw (mul) -- (mulc);
			\draw (pow) -- (powc);
		\end{tikzpicture}
        \vspace{-1ex}
        \caption{Nest. forest}
        \label{fig:iter:nest}
	\end{subcaptionblock}
    \vspace{-2ex}
    \caption{Higher-order function \lst|iter| to construct the \lst|pow|er function}
    \label{fig:iter}
\end{figure}

%% file: sema.tex
\section{Semantics}
\label{sec:sema}

This section studies $\lamG$.
Its most intricate feature is its scopeless graph structure, reminiscent of a \ac{CFG}.
We restrict our presentation to the rules for variables, functions, and applications, which capture the essential behavior of \lamG.
For brevity, we omit rules for tuples and extractions as they are similar to applications, and we skip rules for constants since they are trivial.
While \lamG includes \lst|let|-bindings in the syntax, we omit them from the semantics.
Adding them is straightforward but would introduce unnecessary notation without contributing new ideas.
In practice, our implementation operates on a \ac{SoN} (see \cref{sec:relwork}), making explicit \lst|let|-bindings unnecessary;
they are used only in examples to provide a readable, textual representation of these graphs.
One can think of these examples as if all \lst|let|-bound variables had been inlined with their definitions, with the underlying graph representation sharing common (sub-)expressions as an implementation detail.

\paragraph{Typing}

Since a \lamG program already consists of a label-to-function map~\hl{P}, we do not need an additional typing environment for type checking; instead, we look up labels directly in~$\hl{P}$ (\irefrule{T-Fun}/\iref{T-Var}).
Moreover, because function types are fully annotated, function bodies need not be type-checked at use sites (\irefrule{T-Fun}).
This design allows the \ac{IR} to compute and store the type of each expression at construction time.
A function is type-checked when its body is defined (\irefrule{T-Body}), and the entire program can be type-checked (\irefrule{T-Prog}) by verifying all function bodies and their subexpressions.
The typing rules themselves are standard:
\begin{mathpar}
	\irule{T-App}{
		\hl{P} \vdash \hl{e_1} : \hl{t \to u} \\
		\hl{P} \vdash \hl{e_2} : \hl{t}
	}{
		\hl{P} \vdash \hl{e_1\ e_2} :  \hl{u}
	}
    \and
	\irule{T-Fun}{
        \hl{P}(\hl{\ell}) = \hl{\lam{t}[u]{e}} \\
	}{
		\hl{P} \vdash \hl{\ell} :  \hl{t \to u}
	}
    \and
	\irule{T-Var}{
        \hl{P}(\hl{\ell}) = \hl{\lam{t}[u]{e}} \\
	}{
		\hl{P} \vdash \hl{\var} :  \hl{t}
	}
    \and
	\irule{T-Body}{
		\hl{P} \vdash \hl{e} : \hl{u}
	}{
		\hl{P} \vdash \hl{\lam{t}[u]{e}}
	}
    \and
	\irule{T-Prog}{
		\hl{P} \vdash \hl{\lam{t_1}[u_1]{b_1}} \\
        \cdots \\
		\hl{P} \vdash \hl{\lam{t_n}[u_n]{b_n}}
	}{
		\vdash \underbrace{\{\hl{\ell_1} \mapsto \hl{\lam{t_1}[u_1]{b_1}}, \ldots, \hl{\ell_n} \mapsto \hl{\lam{t_n}[u_n]{b_n}}\}}_\hl{P}
	}
\end{mathpar}

\subsection{Free Variables}
\label{sec:fvs}

We call the variables that occur in an expression up to (but not including) function expressions the \emph{local variables} of the expression:
\begin{align}
	\LV{\hl{\var{}}}   & = \{\hl{\var{}}\}                                                                                                  \\
    \label{eq:lv_app}
	\LV{\hl{e_1\ e_2}} & = \LV{\hl{e_1}} \cup \LV{\hl{e_2}}                                                                                 \\
	\LV{\hl{\ell}}   & = \varnothing                                                                                                      \\
	\intertext{Similarly, we call the functions that occur in an expression up to (but not descending into) function expressions the \emph{local functions} of the expression:}
    \label{eq:lf1}
	\LF{\hl{\var{}}}   & = \varnothing                                                                                                      \\
	\LF{\hl{e_1\ e_2}} & = \LF{\hl{e_1}} \cup \LF{\hl{e_2}}                                                                                 \\
    \label{eq:lf3}
	\LF{\hl{\ell}}   & = \{ \hl{\ell} \}
	\intertext{This allows us to define \emph{free variables} as the solution to the following recursive equations:}
    \label{eq:fv1}
	\fv{\hl{\var{}}}   & = \{\hl{\var{}}\}                                                                                                  \\
    \label{eq:fv2}
	\fv{\hl{e_1\ e_2}} & = \LV{\hl{e_1\ e_2}} \cup \fv{\LF{\hl{e_1\ e_2}}}                                                                  \\
    \label{eq:fv3}
	\fv{\hl{\ell}}     & = \varnothing                                     &  & \text{if $\hl{P}(\hl{\ell}) = \hl{\lam{t}[u]{\unset}}$} \\
    \label{eq:fv_lam}
	\fv{\hl{\ell}}     & = \fv{\hl{e}} \setminus \{\hl{\var{}}\}           &  & \text{if $\hl{P}(\hl{\ell}) = \hl{\lam{t}[u]{\hl{e}}}$} \\
	\intertext{We obtain the final free-variable information as the least fixed point of the recurrence above:}
	\FV{\hl{e}}        & = \mu\,\fv{\hl{e}}
\end{align}

\subsubsection{Implementation}
\label{sec:fv_impl}

Our implementation allows for efficient in-place mutation of function bodies, such as changing the body of \lst|xj| from \cref{fig:loops:lam} to \ref{fig:loops:lam2}.
Expressions, however, are immutable: once created, they cannot be modified.
This allows us to store the local variables and functions during the construction of an expression, as this information remains sound regardless of subsequent mutations to the \ac{IR}.
Thus, local variables and functions can be obtained without descending into subexpressions.

\paragraph{Fixed-Point Iteration}

If the compiler engineer queries the free variables of a function, we solve the fixed point and memoize the result so that subsequent queries return the memoized set directly.
Expressions compute their free variables efficiently via \cref{eq:fv2}, using local variables and the (memoized) free-variable sets of the local functions.
To efficiently track which parts of the program require free-variable (re-)computation, and to prevent infinite cycles during the fixed-point iteration, we associate an integer \verb|mark| with each function.
Initially, \verb|mark| is set to \lst|0|, indicating that the free-variable set is invalid.
The fixed-point iteration increments a counter \verb|run| in each iteration to distinguish the following cases:
\begin{enumerate}[nosep,left=0pt .. \parindent,leftmargin=1.5em]
	\item $\mathtt{mark} = 0$: The free-variable set is invalid; set \verb|mark| to \verb|run| and recompute it.
	\item $\mathtt{mark} = \mathtt{run} - 1$: The free-variable set stems from the previous iteration.
	      Set \verb|mark| to \verb|run| and add any additional variables discovered in this iteration.
	\item $\mathtt{mark} = \mathtt{run}$: The expression belongs to the current iteration, indicating a cyclic dependency.
	      In this case, we return the free-variable set accumulated so far.
	\item Otherwise: The free-variable set is valid. Yield the memoized set without further computation.
\end{enumerate}
To distinguish cases~2 and~3, each new fixed-point iteration increments \verb|run| by~2.
After the fixed-point loop has terminated, no further traversals are required.
The value of \verb|mark| already indicates that the memoized free-variable set is valid (case~4).

To solve the data-flow equations, we require at most $d(G) + 2$ iterations, where $d(G)$ is the loop-connectedness of $G$.
This is the maximum number of back edges on any acyclic path in~$G$~\cite{DBLP:journals/jacm/KamU76,cooper2004iterative}.

A fixed-point loop requires two traversals for acyclic programs ($d(G) = 0$):
the first computes the initial solution, and the second verifies that the fixed point has been reached.
As described above, the fixed-point loop detects cyclic dependencies (case~3).
This enables an additional optimization when no cycles occur.
In this case, computation terminates immediately after the first iteration, as the computed free-variable sets are already sound.

\paragraph{Users \& Invalidation}

The memoized information remains valid as long as no referenced function is modified.
For this reason, each function maintains a set of functions that \emph{use} it:
\begin{equation}
	\UF{\hl{\ell}} = \{\hl{\ell_\mathit{use}} \mid \hl{\ell_\mathit{use}} \to \hl{\ell}\} \qquad\text{where}\qquad
    	\irule{Succ}{
		\hl{P}(\hl{\ell_1}) = \hl{\lam{t}[u]{e}} \\
		\hl{\ell_2} \in \LF{\hl{e}}
	}{
		\hl{\ell_1} \to_\hl{P} \hl{\ell_2}
    }
\end{equation}
If a function is modified, we transitively \emph{invalidate} all functions in its \emph{users} set by setting their \verb|mark| to \lst|0|.
This marks the memoized free-variable sets as invalid, requiring recomputation upon the next query.
This requires traversing only a local part of the program:
first, $\fv{\hl{e}}$ processes only expressions reachable from $\hl{e}$;
second, expressions with valid free-variable sets require no further traversal;
finally, computation is lazy and performed only on demand.

This user set is the \emph{only} reverse dependency we require.
In particular, we do not track classic def-use chains.

\paragraph{Example}

Consider \cref{fig:loops:lam} and \cref{fig:fvs}.
Initially, \lst|mark| is $0$, indicating that the free-variable cache (column $\mathit{FV}$) is invalid.
Suppose we wish to compute the free variables of \lst|xi|.
These are given by the local variables of its body, united with the free variables of the local functions appearing in that body while removing the variable introduced in \lst|xi| itself:
\begin{equation*}
    \FV{\mathtt{xi}}
    = \LV{\mathtt{ret}\ \mathtt{i1}} \cup \FV{\LF{\mathtt{ret}\ \mathtt{i1}}} \setminus \{\Var{xi}\}
    = \{\Var{f}, \Var{hi}\}
\end{equation*}
Local variables and local functions of every expression are precomputed at construction time, so no descent into subexpressions is required.
Since this is a new computation, we increase \lst|run| by two.
Because the computation does not involve cycles, the result is already sound after a single iteration, and no further traversal is necessary.

Next, consider computing the free variables of \lst|f|.
We again increase \lst|run| by two, to $4$, and recursively compute free variables, marking each visited function with this value.
Starting from \lst|f|, we reach \lst|hi|, then \lst|bi|, \lst|hj|, \lst|bj|, and finally revisit \lst|hj|, as indicated by $\mathtt{mark} = \mathtt{run} = 4$.
This signals a cyclic dependency, and we therefore use the partial free-variable set for \lst|hj| accumulated so far.
Continuing with the computation for \lst|xj| requires revisiting \lst|hi|, which is likewise marked with the current value of \lst|run|; consequently, we again return the partial result computed so far.
Finally, we reach \lst|xi|, whose \lst|mark| indicates that its cached free-variable set is already sound, so we return the memoized set directly.
This completes the first iteration of the fixed-point computation.
In the second iteration ($\mathtt{run} = 5$), we propagate $\Var{f}$ to \lst|bj| and \lst|xj|.
The computation stabilizes in the third iteration ($\mathtt{run} = 6$), after which all computed $\mathit{FV}$ sets are sound.
Subsequent free-variable queries for any of these functions return the memoized results directly.

Now suppose we unset \lst|xj|, which invalidates all functions whose free-variable information depends on \lst|xj|, leaving only \lst|xi| valid.
We then set the body of \lst|xj| to \lst|hi i2| (as in \cref{fig:loops:lam2}) and again query the free variables of \lst|f|.
We increase \lst|run| by two, to $8$, and recursively compute free variables while updating \lst|mark| accordingly.
As before, \lst|xi| does not need to be processed, since its cached information remains valid.
In contrast to \cref{fig:loops:lam}, $\Var{hi}$ now appears free in \lst|xj| in \cref{fig:loops:lam2}, and is propagated during the fixed-point iteration.
The computed sets are already sound in the second iteration ($\mathtt{run} = 9$), which is confirmed by a third iteration ($\mathtt{run} = 10$).
See \cref{sec:trie} for an efficient representation of these sets and their operations.

\subsection{Nesting \& Well-Formedness}
\label{sec:wf}

\begin{wrapfigure}{r}{.35\textwidth}
\vspace{-.5ex}
\begin{lstlisting}[basicstyle=\ttfamily\footnotesize,xleftmargin=1ex,xrightmargin=1ex,language=Caml]
    f TO LAM int -> int. g 23
    g TO LAM int -> int. h $\Var{f}$
    h TO LAM int -> int. bar $\Var{g}$
\end{lstlisting}
\vspace{-2ex}
\caption{transitive nesting}
\vspace{-2ex}
\label{fig:trans}
\end{wrapfigure}
Ultimately, we must determine the nesting structure induced by a $\lamG$ program.
Consider the program in \cref{fig:trans}:
As $\Var{f}$ appears free in~\lst|g|, \lst|g| must be nested within~\lst|f|.
As $\Var{g}$ appears free in~\lst|h|, \lst|h| must be nested within~\lst|g|.
However, note that $\Var{f}$ does \emph{not} appear free in~\lst|h|.
Nevertheless, \lst|h| must also be nested within~\lst|f|.
Therefore, the nesting relation must be transitive.
We say
$\hl{\ell_1}$ \emph{strictly nests} $\hl{\ell_2}$ if $\hl{\ell_1} \succ_\hl{P} \hl{\ell_2}$, and
$\hl{\ell_1}$ \emph{nests} $\hl{\ell_2}$ if $\hl{\ell_1} \succeq_\hl{P} \hl{\ell_2}$:
\begin{mathpar}
	\irule{N-Strict}{
		\var[\ell_1] \in \FV{\hl{\ell_2}}
	}{
		\hl{\ell_1} \succ_\hl{P} \hl{\ell_2}
	}
	\and
	\irule{N-Trans}{
		\hl{\ell_1} \succ_\hl{P} \hl{\ell_2} \\
		\hl{\ell_2} \succ_\hl{P} \hl{\ell_3}
	}{
		\hl{\ell_1} \succ_\hl{P} \hl{\ell_3}
	}
	\\
	\irule{N}{
		\hl{\ell_1} \succ_\hl{P} \hl{\ell_2}
	}{
		\hl{\ell_1} \succeq_\hl{P} \hl{\ell_2}
	}
	\and
	\irule{N-Refl}{
		\hl{\ell_1} = \hl{\ell_2} \\
	}{
		\hl{\ell_1} \succeq_\hl{P} \hl{\ell_2}
	}
\end{mathpar}

\begin{wrapfigure}{r}{.3\textwidth}
\vspace{-.5ex}
\begin{lstlisting}[basicstyle=\ttfamily\footnotesize,xleftmargin=1ex,xrightmargin=1ex,language=Caml]
    f TO LAM int -> int. $\Var{g}$
    g TO LAM int -> int. $\Var{f}$
\end{lstlisting}
\vspace{-2ex}
\caption{Nonsensical nesting}
\label{fig:nonsense}
\vspace{-1ex}
\end{wrapfigure}
One may construct nonsensical, cyclic nesting dependencies.
For example, \lst|f| must be nested within \lst|g| in \cref{fig:nonsense} as \lst|g|'s variable appears free in \lst|f|.
Conversely, \lst|g| must be nested within \lst|f| as \lst|f|'s variable appears free in \lst|g|:
We prohibit such contradictory programs by requiring the nesting relation to be acyclic, i.e., antisymmetric.
We say the program is \emph{well-formed}:
\begin{mathpar}
	\irule{WF}{
		\forall \hl{\ell_1}, \hl{\ell_2} \in \dom{\hl{P}}:
		\hl{\ell_1} \succeq_\hl{P} \hl{\ell_2} \succeq_\hl{P} \hl{\ell_1} \Rightarrow \hl{\ell_1} = \hl{\ell_2}
	}{
		\wf \hl{P}
	}
\end{mathpar}

\subsubsection{CFG \& Dominance}

A node $n$ \emph{dominates} node $m$ in a directed graph with a unique start node, if all paths from the start to $m$ also go through $n$~\cite{10.1145/1460299.1460314}.
In order to overlay a \ac{CFG} over a $\lamG$ program, we pick a root $\hl{\ell_0}$ and only consider functions as \ac{CFG} nodes $N$ that $\hl{\ell_0}$ nests---much like in a traditional \ac{CFG} where other top-level functions like \lst|printf| do not appear in the \ac{CFG} where they are called:
\begin{equation*}
    \mathsf{CFG}(\hl{P}, \hl{\ell_0}) = \langle N, \leadsto \rangle \quad
    \text{where}\quad
    N = \{\hl{\ell} \in \dom{\hl{P}} \mid \hl{\ell_0} \succeq_\hl{P} \hl{\ell}\} \quad
	\irule{CFG-Succ}{
		\hl{\ell_1} \to_\hl{P} \hl{\ell_2} \\
        \hl{\ell_1},\hl{\ell_2} \in N
	}{
		\hl{\ell_1} \leadsto \hl{\ell_2}
	}
\end{equation*}
If the program is higher-order, the interpretation of the \ac{CFG} becomes unclear.
Nonetheless, we introduce this concept solely to show that nesting implies dominance.
Reconsider \cref{eq:fv_lam}: this equation is the sole location where a free variable is removed.
In other words, a free variable propagates through function expressions unless it refers to its defining function:

\begin{lemma}[One-Step Propagation]\label{lem:prop}
    If
    $\hl{\ell_1} \to_\hl{P} \hl{\ell_2}$ and
    $\hl{\var[\ell]} \in \FV{\hl{\ell_2}}$ and
    $\hl{\ell} \neq \hl{\ell_1}$, then
    $\hl{\var[\ell]} \in \FV{\hl{\ell_1}}$.
\end{lemma}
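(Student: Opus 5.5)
The argument unfolds the least-fixed-point characterization of $\mathit{FV}$ once, using that $\mathit{FV}$ is itself a fixed point of the equations (\ref{eq:fv1})--(\ref{eq:fv_lam}). First, I unpack $\hl{\ell_1} \to_\hl{P} \hl{\ell_2}$ via \irefrule{Succ}. It yields $\hl{P}(\hl{\ell_1}) = \hl{\lam{t}[u]{e}}$ with a set body $\hl{e}$ and $\hl{\ell_2} \in \LF{\hl{e}}$. Since $\hl{\ell_1}$ has a set body, \cref{eq:fv_lam} applies at the fixed point $\FV{\cdot} = \mu\,\fv{\cdot}$ and gives
\begin{equation*}
\FV{\hl{\ell_1}} = \FV{\hl{e}} \setminus \{\var[\ell_1]\}.
\end{equation*}
Because $\var[\ell] \neq \var[\ell_1]$ (labels identify variables, and $\hl{\ell} \neq \hl{\ell_1}$), it suffices to show $\var[\ell] \in \FV{\hl{e}}$.

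Next, I prove the auxiliary fact that for every expression $\hl{e}$ and every $\hl{\ell_2} \in \LF{\hl{e}}$, we have $\FV{\hl{\ell_2}} \subseteq \FV{\hl{e}}$. This goes by structural induction on $\hl{e}$. For $\hl{e} = \hl{\ell_2}$ the inclusion is trivial. For $\var$, $\LF$ is empty by \cref{eq:lf1}. For an application $\hl{e_1\ e_2}$, \cref{eq:fv2} at the fixed point reads $\FV{\hl{e_1\ e_2}} = \LV{\hl{e_1\ e_2}} \cup \bigcup_{\hl{\ell'} \in \LF{\hl{e_1\ e_2}}} \FV{\hl{\ell'}}$, which contains $\FV{\hl{\ell_2}}$ directly. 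Tuples and extracts are handled analogously, as the paper treats them like applications. In fact, the application case shows that no nested induction is needed at all: \cref{eq:fv2} already ranges over all local functions at once.

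The only real obstacle is justifying that the least fixed point $\mu\,\fv{}$ actually satisfies the defining equations as equalities, not merely as inequalities. For this I would argue that the equations define a monotone operator on the complete lattice of assignments from labels and expressions to sets of variables, ordered pointwise by inclusion. Monotonicity holds because set difference with a fixed singleton and union are both monotone. Knaster--Tarski then gives a least fixed point satisfying every equation. With that in hand, the two steps above combine to give $\var[\ell] \in \FV{\hl{\ell_2}} \subseteq \FV{\hl{e}}$, and therefore $\var[\ell] \in \FV{\hl{e}} \setminus \{\var[\ell_1]\} = \FV{\hl{\ell_1}}$.
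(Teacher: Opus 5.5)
Your proposal is correct and takes the same route as the paper, whose proof is just \enquote{by definition of $\mathit{FV}$}. You unfold \cref{eq:fv_lam} at $\hl{\ell_1}$ and \cref{eq:fv2} at its body to see that $\FV{\hl{\ell_2}}$ is contained in the free variables of that body and that only $\var[\ell_1]$ is removed; your Knaster--Tarski remark simply makes explicit the fact, left implicit in the paper, that the least fixed point satisfies these equations.
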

\begin{proof}
    By definition of $\mathit{FV}$.
\end{proof}
This also holds for paths:
\begin{lemma}[Path Propagation]\label{lem:path}
    If $\hl{\ell_0} \to_\hl{P} \cdots \to_\hl{P} \hl{\ell_k}$ and
    $\hl{\var[\ell]} \in \FV{\hl{\ell_k}}$ and
    $\hl{\ell} \notin \{\hl{\ell_0}, \ldots, \hl{\ell_k}\}$, then
    $\hl{\var[\ell]} \in \FV{\hl{\ell_0}}$.
\end{lemma}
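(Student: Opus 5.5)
We prove \cref{lem:path} by induction on the length $k$ of the path, applying \cref{lem:prop} once per edge. It is easiest to induct from the end of the path, so that the induction hypothesis concerns a shorter suffix ending at the same $\hl{\ell_k}$.

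\emph{Base case} ($k = 0$). The path consists only of $\hl{\ell_0}$. The hypothesis $\hl{\var[\ell]} \in \FV{\hl{\ell_k}}$ is then literally $\hl{\var[\ell]} \in \FV{\hl{\ell_0}}$, so there is nothing to show.

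\emph{Inductive step} ($k > 0$). Consider the suffix $\hl{\ell_1} \to_\hl{P} \cdots \to_\hl{P} \hl{\ell_k}$, which has length $k-1$. Its label set $\{\hl{\ell_1}, \ldots, \hl{\ell_k}\}$ is contained in $\{\hl{\ell_0}, \ldots, \hl{\ell_k}\}$, so $\hl{\ell}$ is not among its labels either. The induction hypothesis therefore gives $\hl{\var[\ell]} \in \FV{\hl{\ell_1}}$. Now apply \cref{lem:prop} to the first edge $\hl{\ell_0} \to_\hl{P} \hl{\ell_1}$, using $\hl{\var[\ell]} \in \FV{\hl{\ell_1}}$ and $\hl{\ell} \neq \hl{\ell_0}$ (again from the non-membership hypothesis). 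This yields $\hl{\var[\ell]} \in \FV{\hl{\ell_0}}$.

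The only care needed is in formulating the statement so the induction goes through. The path should be represented as a list or inductive relation (as in the \lean mechanization), and the hypothesis must be the full non-membership condition rather than just $\hl{\ell} \neq \hl{\ell_0}$. This is what lets the side condition be inherited by every suffix, and it is exactly what \cref{lem:prop} requires at each edge.

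No fixed-point reasoning about $\mathit{FV}$ is needed here, since \cref{lem:prop} already packages the relevant property of the least fixed point.
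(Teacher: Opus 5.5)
Your proof is correct and follows the paper's route: induction on the path length $k$, with \cref{lem:prop} applied to the first edge $\ell_0 \to_P \ell_1$. The side condition $\ell \neq \ell_0$ needed there, and the non-membership hypothesis for the suffix, both come from the full condition $\ell \notin \{\ell_0, \ldots, \ell_k\}$. The paper states this argument in one line ("by induction on $k$ and \cref{lem:prop}"); you have written out its base case and inductive step explicitly.
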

\begin{proof}
    By induction on $k$ and \cref{lem:prop}.
\end{proof}
This insight lets us prove that nesting implies dominance:
\begin{theorem}[Nesting-Dominance]\label{thm:dom}
    If $\wf \hl{P}$ and
    $\hl{\ell_0} \succeq_\hl{P} \hl{\ell_1} \succeq_\hl{P} \hl{\ell_2}$,
	then $\hl{\ell_1}$ dominates $\hl{\ell_2}$ in $\mathsf{CFG}(\hl{P}, \hl{\ell_0})$.
\end{theorem}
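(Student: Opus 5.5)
The plan is to unfold the nesting chain between $\hl{\ell_1}$ and $\hl{\ell_2}$ into single free-variable steps, and then use \cref{lem:path} together with well-formedness to show, by induction on the chain length, that every \ac{CFG} path from the root to $\hl{\ell_2}$ visits $\hl{\ell_1}$.

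First I dispose of the trivial cases. If $\hl{\ell_1} = \hl{\ell_2}$, dominance is reflexive. If $\hl{\ell_0} = \hl{\ell_1}$, the start node dominates everything. Otherwise $\hl{\ell_1} \succ_\hl{P} \hl{\ell_2}$. By induction on the derivation built from \iref{N-Strict} and \iref{N-Trans}, this yields a chain $\hl{\ell_1} = m_0, m_1, \ldots, m_j = \hl{\ell_2}$ with $j \geq 1$ and $\var[m_{i-1}] \in \FV{m_i}$ for every $i$. Each $m_i$ is nested by $\hl{\ell_0}$, since $\hl{\ell_0} \succeq_\hl{P} \hl{\ell_1} \succeq_\hl{P} m_i$, so all of them are nodes of $\mathsf{CFG}(\hl{P}, \hl{\ell_0})$. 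Any \ac{CFG} path is in particular a $\to_\hl{P}$-path, so \cref{lem:path} applies to it.

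The core claim is stated for an arbitrary $j \geq 1$, with the endpoint $m_j$ free rather than fixed to $\hl{\ell_2}$, so that it can be reapplied to the shorter chain ending at $m_{j-1}$. The claim is: for every chain $\hl{\ell_1} = m_0, \ldots, m_j$ as above, every path $\hl{\ell_0} = p_0 \to_\hl{P} \cdots \to_\hl{P} p_k = m_j$ contains $\hl{\ell_1}$. I prove it by induction on $j$.

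Take such a path. First I show that it must contain $m_{j-1}$.
\begin{itemize}
\item Suppose it did not.
\item Since $\var[m_{j-1}] \in \FV{m_j}$, \cref{lem:path} gives $\var[m_{j-1}] \in \FV{\hl{\ell_0}}$, hence $m_{j-1} \succ_\hl{P} \hl{\ell_0}$ by \iref{N-Strict}.
\item On the other hand, $\hl{\ell_0} \succeq_\hl{P} m_{j-1}$ as noted above.
\item $\wf \hl{P}$ then forces $m_{j-1} = \hl{\ell_0} = p_0$, which lies on the path. This is a contradiction.
\end{itemize}
So the path visits $m_{j-1}$. For $j = 1$ this vertex is $m_0 = \hl{\ell_1}$, and we are done. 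For $j > 1$, I take the prefix of the path ending at its first visit to $m_{j-1}$; the chain $m_0, \ldots, m_{j-1}$ has length $j-1$, so the induction hypothesis says this prefix, and hence the whole path, contains $\hl{\ell_1}$. Applying the claim to the chain ending at $m_j = \hl{\ell_2}$ proves the theorem.

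The main obstacle I expect is the well-formedness step. \iref{WF} only gives antisymmetry of $\succeq_\hl{P}$, so I must make sure the contradiction comes from the resulting equation $m_{j-1} = \hl{\ell_0}$ clashing with the path avoiding $m_{j-1}$. It must not rely on irreflexivity of $\succ_\hl{P}$, which well-formedness does not provide, since self-recursion gives $\hl{\ell} \succ_\hl{P} \hl{\ell}$. A secondary bookkeeping issue is extracting the explicit chain from \iref{N-Trans} derivations and checking that every $m_i$ indeed lies in $N$.
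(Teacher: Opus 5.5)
Your proof is correct and is essentially the paper's argument. A single free-variable step $\var[m_{j-1}] \in \FV{m_j}$ forces every path from $\hl{\ell_0}$ to $m_j$ through $m_{j-1}$ by \cref{lem:path} and \iref{WF}, and longer chains follow by induction, with your prefix argument simply inlining the transitivity of dominance that the paper invokes. One aside is inaccurate, though your proof does not rely on it: direct self-recursion does not give $\hl{\ell} \succ_\hl{P} \hl{\ell}$, because \cref{eq:fv_lam} removes $\var$ from $\FV{\hl{\ell}}$ (the self-loops in \cref{fig:loops:nest} are sibling dependencies, not nesting), and together with that removal \iref{WF} in fact makes $\succ_\hl{P}$ irreflexive.
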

\begin{proof}
    Assume $\hl{\ell_0} \ne \hl{\ell_1} \ne \hl{\ell_2}$.
    Otherwise the proof follows from basic properties of dominance. If $\hl{\var[\ell_1]} \in \FV{\hl{\ell_2}}$
    and there exists a path $\hl{\ell_0} \leadsto \cdots \leadsto \hl{\ell_2}$ that does \emph{not} contain~$\hl{\ell_1}$,
    we obtain $\Var{\hl{\ell_1}} \in \FV{\hl{\ell_0}}$ by \cref{lem:path} and, hence, $\hl{\ell_1} \succeq_\hl{P} \hl{\ell_0}$.
    But this contradicts well-formedness, since we also have $\hl{\ell_0} \succeq_\hl{P} \hl{\ell_1}$.
    The case where $\hl{\ell_1} \succ_\hl{P} \hl{\ell_i} \succ_\hl{P} \hl{\ell_2}$ then follows by induction and the transitivity of dominance.
\end{proof}
\begin{remark}
    The reverse does not necessarily hold.
    For example, \lst|bi| dominates \lst|hj| in \cref{fig:loops:dom} but \lst|bi| does not nest \lst|hj| in \cref{fig:loops:nest}.
    Additionally, note that $\hl{\ell_0}$ may have free variables.
\end{remark}
Intuitively, well-formedness ensures that every variable use is properly \enquote{enclosed} by its definition---analogous to \cref{prop:dom}.
However, our formulation does not rely on an explicit \ac{CFG} or dominance relation.
Instead, it uses the nesting relation, which naturally extends to higher-order programs where control flow is implicit.
Accordingly, we introduce a structural property that paraphrases \irefrule{WF} and generalizes dominance beyond first-order control flow:
\begin{property}[Nesting]\label{prop:nesting}
    Nesting must be antisymmetric.
\end{property}

\subsection{Substitution}
\label{sec:subst}

Substitution replaces free occurrences of a variable by an expression.
In \lamG, substitution is more intricate than in the standard $\lambda$-calculus,
since \lamG lacks explicit block nesting.
Conversely, the structural nesting that does exist in a program may overapproximate the actual data dependencies between expressions---much like the dominator tree in a \ac{CFG}.
The substitution procedure defined in this section rewrites exactly those expressions that (transitively) depend on the substituted variable.

The central question in substitution is therefore whether a given expression~$\hl{e}$ must be recursively rewritten or can be left unchanged.
This decision depends on the free-variable set of~$\hl{e}$:
if it contains the variable being substituted, $\hl{e}$ must be rewritten.
As we traverse subexpressions, we may encounter functions whose bodies reference the substitution variable;
in such cases, new functions with rewritten bodies are created.
Any expression in which variables of these new functions occur free must also be updated accordingly.
This mirrors the transitivity of the nesting relation (\irefrule{N-Trans}).

To formalize this dependency-sensitive behavior, substitution must keep track of both variable and function rewrites.
Specifically, we maintain a \emph{variable map}~$\hl{V}$ that maps variables to their replacement expressions, and a \emph{function map}~$\hl{F}$ that maps functions to their rewritten counterparts.
Since substitution may introduce new label-to-function mappings, the program~$\hl{P}$ must also be threaded through the process.
Hence, substitution is defined as a function
\begin{align}
	\llangle \hl{V}, \hl{F}, \hl{P}, \hl{e} \rrangle          & = \langle \hl{F'}, \hl{P'}, \hl{e'} \rangle
    \intertext{%
        which recursively rewrites an expression while threading the function and program maps.
        If the domain of~$\hl{V}$ (the set of substitution variables) and the free variables of~$\hl{e}$ are disjoint, no rewriting is required:%
    }
  \label{subst:intersect}
	\llangle \hl{V}, \hl{F}, \hl{P}, \hl{e} \rrangle          & = \langle \hl{F}, \hl{P}, \hl{e} \rangle                                                         &  & \text{if $\FV{\hl{e}} \cap \dom{\hl{V}} = \varnothing$}                                                                                                                                                                        \\
	\intertext{Otherwise, substitution proceeds recursively over all subexpressions:}
	\llangle \hl{V}, \hl{F}, \hl{P}, \hl{e_1\ e_2} \rrangle & = \langle \hl{F''}, \hl{P''}, \hl{e'_1}\ \hl{e'_2} \rangle                                                 &  & \text{where $\llangle \hl{V}, \hl{F\phantom{'}}, \hl{P\phantom{'}}, \hl{e_1} \rrangle = \langle \hl{F'\phantom{'}}, \hl{P'\phantom{'}}, \hl{e'_1} \rangle$}                                                                    \\
	                                                        &                                                                                                  &  & \text{\phantomr{where}{and} $\llangle \hl{V}, \hl{F'}, \hl{P'}, \hl{e_2} \rrangle = \langle \hl{F''}, \hl{P''}, \hl{e'_2} \rangle$}\notag                                                                                      \\
    \intertext{%
        If a free variable is encountered, its substitute is obtained directly from~$\hl{V}$.
        By \cref{subst:intersect}, it must be present in~$\hl{V}$:%
    }
	\llangle \hl{V}, \hl{F}, \hl{P}, \hl{\var} \rrangle     & = \langle \hl{F}, \hl{P}, \hl{V}(\var) \rangle                                                                                                                                                                                                                                                                                       \\
    \intertext{If a function expression is encountered, we either reuse its existing substitution}
	\label{subst:fun1}
	\llangle \hl{V}, \hl{F}, \hl{P}, \hl{\ell} \rrangle     & = \langle \hl{F}, \hl{P}, \hl{F}(\hl{\ell}) \rangle                                                   &  & \text{if $\hl{\ell} \in \dom{\hl{F}}$}                                                                                                                                                                                         \\
    \intertext{or create a new function with a rewritten body:}
	\label{subst:fun2}
	\llangle \hl{V}, \hl{F}, \hl{P}, \hl{\ell} \rrangle     & = \langle \hl{F'}, \hl{P'}[\hl{\ell'} \mapsto \hl{\lam{t}[u]{e'}}], \hl{\ell'} \rangle &  & \text{where $\hl{P}(\hl{\ell}) = \hl{\lam{t}[u]{e}}$, $\hl{\ell'}$ fresh,}                                                                                                                                                  \\
	                                                        &                                                                                                  &  & \mathclap{\text{and $\llangle \hl{V}[\var \mapsto \var[\ell']], \hl{F}[\hl{\ell} \mapsto \hl{\ell'}], \hl{P}[\hl{\ell'} \mapsto \hl{\lam{t}[u]{\unset}}], \hl{e} \rrangle = \langle \hl{F'}, \hl{P'}, \hl{e'} \rangle$}}\notag
\end{align}
In \cref{subst:fun2}, we look up the function bound to~$\hl{\ell}$ in~$\hl{P}$,
create a fresh label~$\hl{\ell'}$ mapped to a stub function with an unset body,
and extend the variable and function maps so that occurrences of the old variable and function
are redirected to their new counterparts.
The unset body serves as a placeholder that resolves recursion:
further rewrites will replace all occurrences of~$\hl{\ell}$ with~$\hl{\ell'}$ (see \cref{subst:fun1}).
Once the rewritten body~$\hl{e'}$ is obtained, it replaces the unset body of~$\hl{\ell'}$.

Substitution preserves well-formedness and typing:
\begin{lemma}[Substitution]\label{lem:subst}
    Let $\llangle \{\var \mapsto \hl{e_\ell}\}, \varnothing, \hl{P}, \hl{e} \rrangle = \langle \hl{F}, \hl{P'}, \hl{e'} \rangle$.
    If
    $\wf \hl{P}$ and
    $\vdash \hl{P}$ and
    $\hl{P} \vdash \hl{e} : \hl{t}$ and
    $\hl{P} \vdash \hl{e_\ell} : \hl{t_\ell}$, then
    $\wf    \hl{P'}$ and
    $\vdash \hl{P'}$ and
    $\hl{P'} \vdash \hl{e'} : \hl{t}$.
\end{lemma}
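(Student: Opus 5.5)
We prove a generalized statement by induction on the recursion of $\llangle\cdot\rrangle$, ordered by the size of $\hl{e}$ and by the finite set of labels not yet in $\dom{\hl{F}}$. The generalization threads an invariant through arbitrary $\hl{V}$ and $\hl{F}$. Call $\langle \hl{V}, \hl{F}, \hl{P}\rangle$ \emph{good} if the following hold.
\begin{enumerate*}[label=(\roman*)]
\item $\vdash\hl{P}$ holds, except that the fresh stubs $\hl{\ell'}\in\mathrm{ran}(\hl{F})$ may still have unset bodies; unset bodies are trivially admissible in \irefrule{T-Prog} once we read \irefrule{T-Body} as vacuous for $\hl{\uparrow}$.
\item Every $\hl{F}(\hl{\ell})$ has the same type annotation as $\hl{\ell}$, and every $\hl{V}(\var)$ has the type of $\var$; for the original entry this is $\hl{t_\ell}$, which we assume matches the domain type of $\hl{\ell}$.
\item $\wf\hl{P}$, and the nesting of new labels mirrors the old one: if $\var[\hl{m}]\in\FV[\hl{P}]{\hl{\ell'}}$ for a new label $\hl{\ell'}=\hl{F}(\hl{\ell})$, then either $\hl{m}=\hl{F}(\hl{k})$ with $\hl{k}\succ_\hl{P}\hl{\ell}$ originally, or $\hl{m}$ is an old label with $\var[\hl{m}]\in\FV{\hl{\ell}}\cup\FV{\hl{e_\ell}}$.
\end{enumerate*}
The claim is that a good input with $\hl{P}\vdash\hl{e}:\hl{t}$ yields a good output with $\hl{P'}\vdash\hl{e'}:\hl{t}$, that $\hl{P}\subseteq\hl{P'}$ except for filling stubs, and that $\hl{F}\subseteq\hl{F'}$.

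Typing is the routine half. Case \cref{subst:intersect} is immediate, using monotonicity of typing under extension of $\hl{P}$: \irefrule{T-Fun} and \irefrule{T-Var} only look up labels, and filling a stub does not change its annotation. The application case follows from two uses of the induction hypothesis and \irefrule{T-App}. The variable case uses (ii). Case \cref{subst:fun1} uses (ii) for $\hl{F}$. In case \cref{subst:fun2} the stub $\hl{\lam{t}[u]{\unset}}$ preserves (ii), so the hypothesis types $\hl{e'}:\hl{u}$, and filling the body discharges \irefrule{T-Body} for $\hl{\ell'}$. At the top level $\hl{F}=\varnothing$ and every created stub is filled on return, so we obtain $\vdash\hl{P'}$ outright.

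Well-formedness is the main obstacle. Nesting is defined through the least fixed point $\mathit{FV}$, and adding or filling labels changes $\mathit{FV}$ globally. My plan is to define a \emph{projection} $\pi$ from $\dom{\hl{P'}}$ to old labels: it is the identity on old labels and sends $\hl{F'}(\hl{\ell})$ to $\hl{\ell}$. The key lemma, proved by fixed-point induction on $\mathit{FV}$ together with \cref{lem:path}, is that every strict nesting step in $\hl{P'}$ maps either to a strict nesting step in $\hl{P}$ or to a step that ends at an old label whose variable occurs free in $\hl{e_\ell}$.

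I expect two difficulties.
\begin{enumerate*}[label=(\roman*)]
\item Old labels are never modified, so their $\mathit{FV}$ in $\hl{P'}$ equals their $\mathit{FV}$ in $\hl{P}$. This holds because every label reachable from an old body is old, and $\mathit{FV}$ depends only on reachable bodies.
\item New labels may acquire free variables of $\hl{e_\ell}$ in place of $\var$.
\end{enumerate*}
For a nesting cycle in $\hl{P'}$, first suppose it contains an old label. Such a label nests only old labels, so the entire cycle is old and hence already a cycle in $\hl{P}$, which is impossible. If the cycle is entirely new, its projection is a cycle in $\hl{P}$, again contradicting $\wf\hl{P}$. The subtle point is a new label $\hl{\ell'}$ whose projection $\hl{\ell}$ is nested by a variable of $\hl{e_\ell}$. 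This yields only edges from old labels into new ones, never back, so such edges cannot close a cycle.

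I would try to push this invariant through the case \cref{subst:fun2} first. If it resists, I would fall back to exploiting antisymmetry directly: any cycle in $\hl{P'}$ induces, via \cref{lem:path}, a path in $\hl{P}$ along which $\var[\hl{\ell}]$ propagates back to a label that it nests, contradicting \irefrule{WF}.
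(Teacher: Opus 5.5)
Your split into a typing induction over the recursion of the substitution, plus a separate cycle argument for well-formedness, is a reasonable route. It is more explicit than the paper's one-line induction on the derivations of $\vdash_{\mathrm{wf}} P$, $\vdash P$ and $P \vdash e : t$, which is mechanized in Lean. The typing half is fine. The well-formedness half, however, rests on a false invariant. The old-label alternative in (iii), and correspondingly the second disjunct of your key lemma, fail because $F$ is threaded across calls while $V$ is scoped.

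Here is a counterexample:
\begin{itemize}
\item Take any $\ell$ with domain $\mathtt{int}$ and closed body, $x \mapsto \lambda\,\mathtt{int}\to\mathtt{int}.\ n\ 0$, $n \mapsto \lambda\,\mathtt{int}\to\mathtt{int}.\ \mathsf{var}\,\ell + \mathsf{var}\,x$, $e = n\ (x\ 0)$ and $e_\ell = 5$.
\item Then $\mathit{FV}_P(n) = \{\mathsf{var}\,\ell, \mathsf{var}\,x\}$ and $\mathit{FV}_P(x) = \{\mathsf{var}\,\ell\}$, and $P$ is well-formed and typed.
\item The traversal meets $n$ first, outside $x$'s scope. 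So $n' = F(n)$ gets body $5 + \mathsf{var}\,x$ with the \emph{old} $\mathsf{var}\,x$.
\item Rewriting $x$ then reuses $n'$ via \cref{subst:fun1}, yielding $x' \mapsto \lambda\,\mathtt{int}\to\mathtt{int}.\ n'\ 0$.
\end{itemize}
Thus $\mathsf{var}\,x \in \mathit{FV}_{P'}(x')$ although $\mathsf{var}\,x \notin \mathit{FV}_P(x) \cup \mathit{FV}(e_\ell)$. The step $x \succ_{P'} x'$ projects to the reflexive pair $(x,x)$, which is neither a strict step of $P$ nor explained by $e_\ell$. So goodness is not preserved, and your inductive claim breaks at \cref{subst:fun1}. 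Relatedly, your sentence ``such a label nests only old labels'' has the direction reversed: here the old $x$ nests the new $x'$.

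The fix is to assert nothing about old-to-new edges, since they can never close a cycle. You only need two facts.
\begin{itemize}
\item[(a)] For old $m$, $\mathit{FV}_{P'}(m) = \mathit{FV}_P(m)$ mentions only old labels. Hence every direct nesting step \emph{into} an old label starts at an old label.
\item[(b)] $\mathsf{var}\,F(k) \in \mathit{FV}_{P'}(F(\ell))$ implies $\mathsf{var}\,k \in \mathit{FV}_P(\ell)$.
\end{itemize}
Prove (b) on the final $P'$ by showing that the following assignment is a pre-fixed point of the recurrence: it is $\mathit{FV}_P$ on old labels, and on $F(\ell)$ it is $\{\mathsf{var}\,F(k) \mid \mathsf{var}\,k \in \mathit{FV}_P(\ell)\}$ together with all old variables. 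The check has two parts. First, $\mathsf{var}\,F(k)$ enters a rewritten body only as $V(\mathsf{var}\,k)$, so $\mathsf{var}\,k$ is a local variable of $\ell$'s body. Second, $F(m)$ is a local function of $F(\ell)$'s body only if $\ell \to_P m$, and then \cref{lem:prop} moves $\mathsf{var}\,k$ from $m$ to $\ell$; the case $k=\ell$ is removed by \cref{eq:fv_lam}.

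A cycle of direct nesting steps in $P'$ is then either entirely old or entirely new. If old, walk backwards using (a). If new, project it using (b); $F$ is injective and there are no direct self-loops. Both cases contradict $\vdash_{\mathrm{wf}} P$.

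Two smaller points. Your termination measure must be lexicographic, with the number of \emph{original} labels outside $\mathrm{dom}(F)$ as the primary component, because \cref{subst:fun2} recurses into a body that may be larger than $e$. Your assumption that $t_\ell$ is the domain type of $\ell$ is genuinely needed, since the statement as printed does not enforce it.
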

\begin{proof}
    By induction on a derivation of $\wf \hl{P}$, $\vdash \hl{P}$, and $\hl{P} \vdash \hl{e} : \hl{t}$.
\end{proof}

\subsection[beta-Reduction]{$\beta$-Reduction}
\label{sec:beta}

Building on substitution, we now define $\beta$-reduction
\begin{mathpar}
	\beta\inferrule{
        \hl{P}(\hl{\ell}) = \hl{\lam{t}[u]{e}} \\
        \llangle \{\var \mapsto \hl{e_v}\}, \varnothing, \hl{P}, \hl{e} \rrangle = \langle \hl{F'}, \hl{P'}, \hl{e'_b} \rangle
	}{
		\hl{P}, \hl{\ell\ e_v} \to_\beta  \hl{P'},\hl{e'_b}
	}
\end{mathpar}
and define full evaluation using a standard left-to-right, strict evaluation order:
\begin{mathpar}
    \irule{V-Fun}{
    }{
        \val{\hl{\ell}}
    }
    \and
    \irule{V-Const}{
    }{
        \val{\hl{c}}
    }
    \and
	\irule{E-App1}{
		\hl{P}, \hl{e_1} \to  \hl{P'},\hl{e'_1}
	}{
		\hl{P}, \hl{e_1\ e_2} \to  \hl{P'},\hl{e'_1\ e_2}
	}
    \and
	\irule{E-App2}{
        \val{\hl{e_1}} \\
		\hl{P}, \hl{e_2} \to  \hl{P'},\hl{e'_2}
	}{
		\hl{P}, \hl{e_1\ e_2} \to  \hl{P'},\hl{e_1\ e'_2}
	}
    \and
	\textsc{E-}\beta\inferrule{
        \val{\hl{e_v}} \\
		\hl{P}, \hl{\ell\ e_v} \to_\beta  \hl{P'},\hl{e'_b}
	}{
		\hl{P}, \hl{\ell\ e_v} \to  \hl{P'},\hl{e'_b}
	}
\end{mathpar}

\begin{theorem}[Progress]
    If
    $\hl{P} \vdash \hl{e} : \hl{t}$ and
    $\FV{\hl{e}} = \varnothing$, then either
    $\val{\hl{e}}$ or $\exists \hl{P'},\hl{e'}: \hl{P},\hl{e} \rightarrow \hl{P'},\hl{e'}$.
\end{theorem}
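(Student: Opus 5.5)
We prove the statement by induction on the derivation of $\hl{P} \vdash \hl{e} : \hl{t}$, restricting attention, as the paper does, to variables, functions, and applications. Constants are values by \iref{V-Const}, and tuples and extractions are handled analogously to applications.

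\emph{Base cases.} If $\hl{e} = \hl{\ell}$ (\irefrule{T-Fun}), then $\val{\hl{\ell}}$ by \iref{V-Fun}. If $\hl{e} = \var$ (\irefrule{T-Var}), then \cref{eq:fv1} gives $\var \in \FV{\var}$. This contradicts $\FV{\hl{e}} = \varnothing$, so the case cannot arise.

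\emph{Application $\hl{e_1\ e_2}$} (\irefrule{T-App}). We first establish a small auxiliary fact: $\FV{\hl{e_i}} \subseteq \FV{\hl{e_1\ e_2}}$ for $i = 1, 2$. By \cref{eq:lv_app} and its $\mathit{LF}$ analogue, local variables and local functions distribute over application. Unfolding \cref{eq:fv2} at the least fixed point then gives
\[
\FV{\hl{e_1\ e_2}} = \LV{\hl{e_1}} \cup \LV{\hl{e_2}} \cup \FV{\LF{\hl{e_1}}} \cup \FV{\LF{\hl{e_2}}} = \FV{\hl{e_1}} \cup \FV{\hl{e_2}}.
\]
Hence both subexpressions are closed, and the induction hypothesis applies to each.
\begin{itemize}
\item If $\hl{e_1}$ steps, we conclude by \iref{E-App1}.
\item If $\val{\hl{e_1}}$ and $\hl{e_2}$ steps, we conclude by \iref{E-App2}.
\item Otherwise both are values. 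A canonical-forms argument on $\hl{P} \vdash \hl{e_1} : \hl{t \to u}$ then shows that $\hl{e_1}$ is a label $\hl{\ell}$. Integer and boolean constants do not have function type, and built-ins are treated as labels or have their own trivial $\delta$-rules, which we omit as the paper does.
\end{itemize}
Because \irefrule{T-Fun} requires $\hl{P}(\hl{\ell}) = \hl{\lam{t}[u]{e}}$ with an expression body, the body is set and not $\hl{\uparrow}$. The $\beta$ rule then needs $\llangle \{\var \mapsto \hl{e_2}\}, \varnothing, \hl{P}, \hl{e} \rrangle$ to produce a result, after which \textsc{E-}$\beta$ applies.

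\emph{Main obstacle: substitution is total.} We must show that the substitution function is defined and terminates on every well-typed body. For definedness, take the variable case $\llangle \hl{V}, \hl{F}, \hl{P}, \var[\ell'] \rrangle$. If $\var[\ell'] \notin \dom{\hl{V}}$, then $\FV{\var[\ell']} \cap \dom{\hl{V}} = \varnothing$, so \cref{subst:intersect} fires first and the lookup $\hl{V}(\var[\ell'])$ is only reached when it is defined. The function case needs $\hl{P}(\hl{\ell})$ to have a set body. We obtain this from $\vdash \hl{P}$: every label reachable from a typed body is typed via \irefrule{T-Fun}. Alternatively, an unset body has empty $\mathit{FV}$ by \cref{eq:fv3}, so \cref{subst:intersect} would already have returned.

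For termination, we use the measure of labels in $\dom{\hl{P}} \setminus \dom{\hl{F}}$. The map $\hl{F}$ only grows, and each application of \cref{subst:fun2} adds an original label to $\dom{\hl{F}}$. Fresh labels $\hl{\ell'}$ are never traversed, since they appear only as results. Recursion on subexpressions is structural between function unfoldings. Hence the lexicographic measure (number of unprocessed original labels, expression size) decreases.

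This totality argument is where we expect the real work to lie. The rest of the proof is standard, and we would isolate totality as a separate lemma, proved by the same induction that underlies \cref{lem:subst}.
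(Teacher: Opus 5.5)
Your proposal is correct and takes the same route as the paper: induction on the derivation of $\hl{P} \vdash \hl{e} : \hl{t}$, passing closedness down to the subterms of an application and concluding with \iref{E-App1}, \iref{E-App2}, or \textsc{E-}$\beta$. Your explicit totality argument for substitution fills in a step the paper leaves implicit, with two small corrections. First, Progress does not assume $\vdash \hl{P}$, so justify set bodies via \cref{eq:fv3,subst:intersect} rather than via $\vdash \hl{P}$. Second, your measure should count labels of the original program, since $\dom{\hl{P}}$ grows by one stub with every use of \cref{subst:fun2}.
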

\begin{remark}
    Since $\hl{e}$ is closed, it follows from \cref{lem:path} that the part of the program $\hl{P}$ reachable from $\hl{e}$ is well-formed.
\end{remark}
\begin{proof}
	By induction on a derivation of $\hl{P} \vdash \hl{e} : \hl{t}$.
\end{proof}
\begin{theorem}[Preservation]
    If
    $\wf \hl{P}$ and
    $\vdash \hl{P}$ and
    $\hl{P} \vdash \hl{e} : \hl{t}$ and
    $\hl{P},\hl{e} \rightarrow \hl{P'},\hl{e'}$, then
    $\wf \hl{P'}$ and
    $\vdash \hl{P'}$ and
    $\hl{P'} \vdash \hl{e'} : \hl{t}$.
\end{theorem}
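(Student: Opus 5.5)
We prove Preservation by induction on the derivation of $\hl{P},\hl{e} \rightarrow \hl{P'},\hl{e'}$, with a case split on the last rule: \iref{E-App1}, \iref{E-App2}, or \textsc{E-}$\beta$. In every case the only operation that changes the program is the substitution inside $\beta$-reduction, and its well-formedness and typing behaviour is already packaged in \cref{lem:subst}. The work is therefore mostly bookkeeping around that lemma, plus one monotonicity fact about typing under program extension.

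\emph{Case \textsc{E-}$\beta$.} Here $\hl{e} = \hl{\ell\ e_v}$ with $\hl{P}(\hl{\ell}) = \hl{\lam{t'}[u]{e_b}}$. The substitution $\llangle \{\var \mapsto \hl{e_v}\}, \varnothing, \hl{P}, \hl{e_b} \rrangle = \langle \hl{F'}, \hl{P'}, \hl{e'_b} \rangle$ yields the result. Inverting \iref{T-App} and \iref{T-Fun} gives $\hl{P} \vdash \hl{e_v} : \hl{t'}$ and $\hl{t} = \hl{u}$. Inverting $\vdash \hl{P}$ through \iref{T-Prog} and \iref{T-Body} gives $\hl{P} \vdash \hl{e_b} : \hl{u}$. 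Then \cref{lem:subst} applies directly and yields $\wf \hl{P'}$, $\vdash \hl{P'}$ and $\hl{P'} \vdash \hl{e'_b} : \hl{u}$, as required.

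One remark, depending on how \cref{lem:subst} is read: its rewrite of $\var$ to $\hl{e_v}$ only preserves types if $\hl{t_\ell} = \hl{t'}$, the domain type of $\hl{\ell}$. If the mechanized lemma does not state this explicitly, we read it as implicit in \iref{T-Var}, which types $\var$ at the domain type, and supply $\hl{t_\ell} = \hl{t'}$ from the inversion above.

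\emph{Cases \iref{E-App1} and \iref{E-App2}.} Take $\hl{e} = \hl{e_1\ e_2}$ with $\hl{P}, \hl{e_1} \to \hl{P'}, \hl{e'_1}$; the other case is symmetric. By inversion, $\hl{P} \vdash \hl{e_1} : \hl{t_2 \to t}$ and $\hl{P} \vdash \hl{e_2} : \hl{t_2}$. The induction hypothesis gives $\wf \hl{P'}$, $\vdash \hl{P'}$ and $\hl{P'} \vdash \hl{e'_1} : \hl{t_2 \to t}$. It remains to show $\hl{P'} \vdash \hl{e_2} : \hl{t_2}$, for which we need a \emph{weakening} lemma.

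\emph{Weakening.} We must know that $\hl{P'}$ extends $\hl{P}$: every label in $\dom{\hl{P}}$ keeps its function, and only fresh labels are added. This holds by induction on the substitution equations. \Cref{subst:fun2} only adds fresh $\hl{\ell'}$, first with unset body and later overwritten with $\hl{e'}$ at that same fresh label, and no equation mutates an existing label. Given this, typing is monotone under extension by induction on typing derivations, since \iref{T-Fun} and \iref{T-Var} only look up labels that persist unchanged.

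We expect this extension invariant to be the main obstacle. It is not stated in \cref{lem:subst}, so in the mechanization it must either be strengthened into that lemma or proved separately by induction on the substitution recursion. The temporary unset stub is the delicate spot, because an intermediate program is not a pure extension at the final body. We would therefore phrase the invariant as ``$\hl{P'}$ agrees with $\hl{P}$ on $\dom{\hl{P}}$'', which holds at every stage.
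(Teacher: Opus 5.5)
Your proof is correct and follows the same route as the paper: induction on the derivation of $\hl{P},\hl{e} \rightarrow \hl{P'},\hl{e'}$, with the \textsc{E-}$\beta$ case discharged by \cref{lem:subst}. The paper's one-line proof leaves implicit two things you state explicitly: the weakening fact needed for \iref{E-App1}/\iref{E-App2} (that $\hl{P'}$ agrees with $\hl{P}$ on $\dom{\hl{P}}$), and the side condition $\hl{t_\ell} = \hl{t'}$ in the $\beta$ case.
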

\begin{proof}
	By induction on a derivation of $\hl{P},\hl{e} \rightarrow \hl{P'},\hl{e'}$ and \cref{lem:subst}.
\end{proof}

\subsection[eta-Conversion]{$\eta$-Conversion}
\label{sec:eta}

Using the notion of free variables, we define $\eta$-reduction.
Dually, for any expression of function type, we can perform $\eta$-expansion, wrapping it into an equivalent function expression:
\begin{mathpar}
	\eta\textsc{-Red}\inferrule{
        \hl{P}(\hl{\ell}) = \hl{\lam{t}[u]{e\ \var}} \\\\
        \var \notin \FV{\hl{e}}
	}{
		\hl{P}, \hl{\ell} \to_\eta \hl{P},\hl{e}
	}
    \and
	\eta\textsc{-Exp}\inferrule{
        \hl{P'} = \hl{P}[\hl{\ell} \mapsto \hl{\lam{t}[u]{e\ \var}}] \\\\
        \text{$\hl{\ell}$ fresh} \\
        \hl{P} \vdash \hl{e} : \hl{t \to u}
	}{
        \hl{P},\hl{e} \leftarrow_\eta \hl{P'}, \hl{\ell}
	}
\end{mathpar}

\subsubsection[Critical Edge Elimination]{Critical Edge Elimination}
\label{sec:crit}

\input{fig/eta}

A \emph{critical edge} in a \ac{CFG} is an edge whose source block has multiple successors and whose destination block has multiple predecessors.
Critical edges hinder many analyses and optimizations and are typically eliminated by inserting an empty intermediate block.
For instance, removing \lst|bi| in \cref{fig:loops:ssa} would introduce a critical edge.


We now examine the corresponding situation in \lamG and remove \lst|bi| in \cref{fig:loops:lam}.
Eliminating \lst|bi| yields the expression \lst|br$_\hl{\bot}$(i1<n, hj, xi)|, which is ill-typed since \lst|hj| has type \lst|int -> BOT|: the argument \lst|i1| to \lst|hj| is lost.
Thus, when translating an SSA-form program to \lamG, all critical edges must be eliminated beforehand.
Alternatively, one may use a branching function with a polymorphic type \lst|FORALLT.FORALLU.[bool, T -> BOT, U -> BOT, T, U] -> BOT| to pass $\phi$-function arguments explicitly: \lst|br(i1<n, hj, xi, i1, ())|.

The key observation is that function \lst|hj| is used \emph{both} as a callee and as an argument to another function (\lst|br|).
We call a function expression \emph{known} if it appears as the callee of an application, and \emph{unknown} otherwise.
A function is \emph{well-known} if all of its function expressions are known~\cite{10.5555/889478}.
Accordingly, \lst|hj| is no longer well-known as it was in the original program before the critical edge was split via \lst|bi|.

More generally, optimizations may wish to change a function’s type---for example, to remove dead parameters.
Consider \cref{fig:eta1}, where the second parameter of \lst|f| is dead.
We could create a specialized version \lst|f'| that omits this parameter, but doing so would render \lst|g f'| ill-typed.
However, if we first $\eta$-expand \lst|f| to \lst|f$_\eta$| and rewrite \lst|g f| as \lst|g f$_\eta$|, then \lst|f| becomes well-known (\cref{fig:eta2}).
At that point, it is safe to remove the unused parameter (yielding \lst|f'|, \cref{fig:eta3}) and to adjust all known function expressions, \lst|f (x, y)| to \lst|f' x| in this example.

In summary, if a function has both known and unknown function expressions, we can construct an $\eta$-expanded version and substitute it for all previously unknown occurrences.
This transformation makes the original function well-known.
For first-order programs that use only a branching function, this achieves precisely the effect of critical-edge elimination; for higher-order programs, it generalizes that concept.

\subsection{Other Binders and Dependent Types}
\label{sec:dep_types}

This framework naturally extends to dependent types, where types are themselves expressions:
When gathering information, we must also traverse the corresponding types.
For example, \cref{eq:lv_app} is adjusted as follows:
\begin{equation}
	\LV{\hl{e_1\ e_2}} = \LV{\hl{e_1}} \cup \LV{\hl{e_2}} \cup \LV{\hl{e_t}} \qquad \text{where $\hl{P} \vdash \hl{e_1}\ \hl{e_2} : \hl{e_t}$}
\end{equation}

It is straightforward to support additional binders such as dependent pairs.
The computations introduced above just have to be generalized to operate on arbitrary binders rather than functions alone.
Our implementation in \mimir already supports dependent types and several other binders.
The other algorithms presented in this paper are extended accordingly to work on dependently typed expressions and/or other binders.

%% file: fig/eta.tex
\begin{figure}[t]
	\begin{subcaptionblock}{.315\textwidth}
		\begin{lstlisting}[basicstyle=\ttfamily\tiny,xleftmargin=1ex,xrightmargin=1ex,language=lamg]
            f  TO LAM [int, int] -> int. $\Var{f}$.0

            g  TO LAM([int, int] -> int) -> int. B
            h  TO LAM [] -> int.
                 f (x, y) + g f
        \end{lstlisting}
        \vspace{-1ex}
		\caption{\texttt{f} used as callee and argument}
		\label{fig:eta1}
	\end{subcaptionblock}
	\begin{subcaptionblock}{.315\textwidth}
		\begin{lstlisting}[basicstyle=\ttfamily\tiny,xleftmargin=1ex,xrightmargin=1ex,language=lamg]
            f  TO LAM [int, int] -> int. $\Var{f}$.0
            f$_\eta$ TO LAM [int, int] -> int. f $\Var{f_\eta}$
            g  TO LAM([int, int] -> int) -> int. B
            h  TO LAM [] -> int.
                 f (x, y) + g f$_\eta$
        \end{lstlisting}
        \vspace{-1ex}
		\caption{\texttt{f} $\eta$-expanded to \texttt{f}$_\eta$}
		\label{fig:eta2}
	\end{subcaptionblock}
	\begin{subcaptionblock}{.345\textwidth}
		\begin{lstlisting}[basicstyle=\ttfamily\tiny,xleftmargin=1ex,xrightmargin=1ex,language=lamg]
            f' TO LAM int -> int. $\Var{f'}$
            f$_\eta$ TO LAM [int, int] -> int. f' ($\Var{f_\eta}$.0)
            g  TO LAM([int, int] -> int) -> int. B
            h  TO LAM [] -> int.
                 f' x + g f$_\eta$
        \end{lstlisting}
        \vspace{-1ex}
		\caption{\texttt{f} now optimized to \texttt{f'}}
		\label{fig:eta3}
	\end{subcaptionblock}
    \vspace{-2ex}
    \caption{$\eta$-Expansion makes \emph{known} functions \emph{well-known}; this allows us to adjust calls to the original function.}
\end{figure}

%% file: nest.tex
\section{The Nesting Tree}
\label{sec:nest_tree}

While information about free variables is sufficient for many tasks within an \ac{IR}, the \emph{nesting tree} provides additional structural information.
This data structure captures the minimal nesting required to reconstruct lexical scoping.

If the program is well-formed, we can represent the nesting relation as a forest of trees, where each tree is a relaxed dominator tree (\cref{thm:dom}).
This is similar to \acp{CFG}, where each function has its own dominator tree.
Analogous to the immediate dominator in a \ac{CFG}, we introduce the concept of an \emph{immediate nester}:
\begin{equation}
	\inest{\hl{\ell}} =
	\begin{cases}
		\hl{\ell_\mathit{imm}} & \text{if $\hl{\ell_\mathit{imm}} \succ_\hl{P} \hl{\ell}$ and $\nexists \hl{\ell'}: \hl{\ell_\mathit{imm}} \succ_\hl{P} \hl{\ell'} \succ_\hl{P} \hl{\ell}$} \\
		\bot                   & \text{if no such $\hl{\ell_\mathit{imm}}$ exists ($\hl{\ell}$ is root)}
	\end{cases}
\end{equation}

\input{fig/nest}

To construct the nesting tree (\cref{fig:nest}), we traverse all functions reachable from a chosen root $\hl{\ell_\mathtt{root}}$, following the nesting relation.
For each newly discovered function $\hl{\ell}$, we attempt to place it as deep as possible:
starting from the current node \lst|n|, we check whether this node's variable is free in $\hl{\ell}$.
If not, we move up the partially constructed tree and repeat until we find the deepest ancestor whose variable is free in $\hl{\ell}$.
We then attach $\hl{\ell}$ as a child of that ancestor.
Such an ancestor always exists because $\Var{\ell_\mathtt{root}}$ is free in all reachable functions.

\subsection[Translation from Lambda-G to a Lexically Scoped Language]{Case Study: Translation from \lamG to a Lexically Scoped Language}

The nesting tree enables more structured analyses, which are best understood through a concrete example.
We therefore describe how to translate a \lamG program into a functional language with explicit lexical scoping, such as \ocaml.
We consider two running examples:
program~A (\cref{fig:loops:lam}/\subref{fig:loops:ml}/\subref{fig:loops:nest}) and
program~B (\cref{fig:loops:lam2}/\subref{fig:loops:ml2}/\subref{fig:loops:nest2}).
Intuitively, the nesting tree determines how the \enquote{soup of functions} in $\lamG$ must be arranged into lexical scopes so that free variables can access their definitions.
For example, in programs A and B, function \lst|hi| refers to \lst|n = $\Var{f}$.0|.
Consequently, \lst|hi| must be nested within \lst|f| in the \ocaml translation.

\subsubsection{Sibling Dependencies}
\label{sec:siblings}

We must also determine the order in which functions at the same nesting level are defined.
For example, since \lst|bi| invokes \lst|hj| in program B, \lst|hj| must be defined before \lst|bi|.
To formalize this, we introduce \emph{sibling dependencies}, which capture how functions at the same level depend on one another.
These dependencies induce an ordering on functions that share the same immediate nester.
Formally, there is a sibling dependency from $\hl{\ell_1}$ to $\hl{\ell_2}$ if some function $\hl{\ell_1'}$ nested within $\hl{\ell_1}$ uses $\hl{\ell_2}$, and $\hl{\ell_1}$ and $\hl{\ell_2}$ have the same immediate nester:
\begin{mathpar}
	\irule{Sibl}{
		\hl{\ell_1} \succeq_\hl{P} \hl{\ell'_1} \\
		\hl{\ell'_1} \to_\hl{P} \hl{\ell_2} \\
		\inest{\hl{\ell_1}} = \inest{\hl{\ell_2}}
	}{
		\hl{\ell_1} \dep_\hl{P} \hl{\ell_2}
	}
\end{mathpar}
The continuation \lst|xi| in programs A and B, for instance, does not depend on any of its siblings.
It may therefore be placed at any position among them, provided it remains nested within \lst|hi|.

\subsubsection{Direct \& Mutual Recursion}

When sibling dependencies form cycles, they induce recursion.
Acyclic sibling dependencies indicate that one subtree depends on another, but not vice versa.
A self-loop denotes direct recursion, while longer cycles correspond to mutual recursion.

To identify these cycles, we run Tarjan’s \ac{SCC} algorithm~\cite{DBLP:journals/siamcomp/Tarjan72} \emph{independently} on the sibling dependencies \emph{at each level} of the nesting tree.
This identifies \lst|hj| as an inner loop of \lst|hi| in program~B.
Only in the degenerate case of program~A does the nesting tree merge both loops into a mutually recursive set of functions.
As a useful byproduct, \citeauthor{DBLP:journals/siamcomp/Tarjan72}'s algorithm also produces a topological ordering of the resulting \acp{SCC}.
This enriched structure allows us to traverse the nesting tree while tracking loop levels.

In the \ocaml translation, these cycles give rise to recursive bindings:
a self-cycle produces a \lst[language=Caml]|let rec| definition, whereas longer cycles require \lst[language=Caml]|and|.

\paragraph{Irreducible Control Flow}

\emph{Natural loops} are single-entry \acp{SCC} in a \ac{CFG}: they have exactly one header node that \emph{dominates} every node in the loop.
\emph{Irreducible loops} have multiple entry points (hence no single dominating header) and are therefore not natural loops.
This reliance on dominance is a major reason why irreducible loops require special handling in classical \ac{CFG}-based compilers.
In the nesting tree, irreducible loops manifest as larger \acp{SCC} among sibling dependencies.
Our analysis identifies them without special treatment, although subsequent loop optimizations may still require additional care.

\paragraph{Virtual Root}

We also introduce the concept of a \emph{virtual root} that, by definition, nests every function.
This makes it possible to construct a single nesting tree for the entire program, in which all closed functions become direct children of the virtual root.
The sibling dependencies among these functions then induce a traditional call graph; our standard \ac{SCC} detection directly identifies (mutual) recursion.

\subsection{Case Study: Code Motion}
\label{sec:code-place}

As another use case, we consider code motion~\cite{DBLP:journals/toplas/ClickC95}.
In a traditional \ac{SSA}-based \ac{CFG}, \emph{early placement} places each expression as early as possible by first placing its subexpressions;
the deepest of these placements in the dominator tree determines the earliest valid position of the expression itself.
Conversely, \emph{late placement} places expressions as late as possible by first placing all of their users;
the least common ancestor of these placements in the dominator tree yields the latest valid placement.

Both placement strategies can be implemented directly using the nesting tree:
for early placement, we consult an expression's level in the nesting tree;
for late placement, we compute the least common ancestor of its users within the nesting tree.

In addition, the program may contain functions without a variable---such as \lst|bi| in programs A and B---which are leaves of the nesting tree, often introduced during critical-edge elimination (\cref{sec:crit}).
If we want to place expressions there as well, we must also consider sibling dependencies.

%% file: fig/nest.tex
\begin{figure}[t]
	\begin{lstlisting}[language=C,morekeywords={not,or,and,True,False},basicstyle=\tiny\ttfamily]
        queue.push(make_node($\ell_\mathtt{root}$, $\bot$))
        vars $\shortleftarrow$ $\{\Var{\ell_\mathtt{root}}\}$
        while not queue.empty():
            n $\shortleftarrow$ queue.pop()
            for $\ell$ $\in$ LF(n.function.body):
                if not already_constructed($\ell$):
                    for (parent $\shortleftarrow$ n; parent $\neq \bot$; parent $\shortleftarrow$ inest(parent)):
                        if $\Var{\ell_\mathtt{parent}}$ $\in$ FV($\ell$):
                            vars $\shortleftarrow$ vars $\cup$ $\{\Var{\ell}\}$
                            queue.push(make_node($\ell$, parent))
                            break
    \end{lstlisting}
    \vspace{-2ex}
    \caption{Nesting tree construction}
    \label{fig:nest}
\end{figure}

%% file: trie.tex
\section{Efficient Set Operations}
\label{sec:trie}

As discussed in \cref{sec:fvs}, \emph{each expression} stores the sets of local variables ($\mathit{LV}$) and local functions ($\mathit{LF}$).
Furthermore, \emph{each function} stores the set of functions that reference it ($\mathit{UF}$) as well as a cache of its free variables ($\mathit{FV}$).
If this information were stored separately for each expression/function, the space requirements could easily explode.
To avoid this blow‑up, this section introduces a shared data structure that expressions reference rather than duplicate.
Each expression/function therefore stores only a single pointer to its corresponding sets.
As a result, copying reduces to a pointer copy, and checking for (in‑)equality becomes a pointer comparison, which is required to determine whether the iteration described in \cref{sec:fv_impl} has reached a fixed point.

We first describe a naive implementation using hash-consed, ordered arrays, which performs well for small sets.
Subsequently, we introduce an ordered trie with better space requirements.
Finally, we index paths in this trie with a splay tree, as in a link-cut tree, to improve asymptotic runtime behavior.
We consider the operations of insertion, union, removal, membership, and intersection testing.
These are the fundamental set operations required to implement the algorithms discussed in this paper.

\subsection{Hash-consed, Ordered Arrays}

This approach stores each set as an immutable, sorted array of unique elements.
Each array is interned in a hash set, a technique known as \emph{hash-consing}.
Hash-consing ensures that identical arrays are stored only once globally, so equality and sharing reduce to constant-time pointer comparison.

\emph{Inserting} an element into a set creates a new ordered array with the element placed in sorted order.
\emph{Removing} an element constructs a new array without that element.
\emph{Union} is computed via a linear-time merge of the two arrays, since the arrays are ordered.
In each case, the resulting array is hash-consed, and the overall runtime scales linearly with the size of the set, plus the cost of hashing.
\emph{Membership} can be tested via binary search over the ordered array, yielding logarithmic-time complexity.
However, as we use this data structure only for small sets, we found that a linear scan is slightly faster in practice.
To test whether two sets intersect, we traverse the ordered arrays simultaneously, stopping as soon as a shared element is found.
Other algorithms offer better average-case performance but come with higher constant overhead.
Again, as we use this approach only for small sets, we opt for the simpler implementation.

This representation is simple and compact, with very low constant factors.
We arena-allocate the arrays, which makes memory allocation cheap and array accesses more cache-friendly.

The abstract domain of the free variable analysis is the power set $\mathcal{P}(V)$ of the set $V$ of $v \coloneqq |V|$ program variables.
If all elements of this domain were materialized, the total space required by the array representation would be
\begin{equation}
    \sum_{i=0}^{v} i \binom{v}{i} = v 2^{v-1} = \mathcal{O}(v 2^{v})\ .
\end{equation}
In practice, however, the analysis constructs only those elements of $\mathcal{P}(V)$ that arise from analyzing program subexpressions.
Since a program of size $n$ has $\mathcal{O}(n)$ subexpressions, and each subexpression gives rise to at most one new domain element, at most $\mathcal{O}(n)$ distinct domain elements are materialized.
As each such element contains at most $n$ variables, the resulting space bound is $\mathcal{O}(n^2)$.
Likewise, we need at most $\mathcal{O}(n^2)$ space to keep track of functions.

\subsection{Ordered Trie}

\input{fig/trie}
A more space-efficient alternative is a globally shared ordered trie that represents elements of the same abstract domain.
Each element of $\mathcal{P}(V)$ corresponds to a unique node in a full trie, where the sequence of elements in the set is encoded by the path from that node to the root.

For example, the set $\{a,b,c,d\}$ is identified by the node labeled $d_8$ of the ordered trie in \cref{fig:orderedtrie}.
Walking up to the root~$\ocircle$ yields $c_4$, $b_2$, and $a_1$.
The set $\{a,b,c\}$ is identified by the node labeled $c_4$, reusing the same prefix nodes $b_2$ and $a_1$.

Unlike the array-based representation, the trie shares common prefixes between sets.
Consequently, if all elements of the abstract domain $\mathcal{P}(V)$ were materialized, the full trie would have exactly $2^{v}$ nodes.
Under the same assumptions as above, however, only $\mathcal{O}(n)$ domain elements arise during the analysis of a program of size $n$.
In this case, the trie requires only $\mathcal{O}(n)$ space, improving upon the quadratic bound of the array-based representation under the same assumptions.
Likewise, we need at most $\mathcal{O}(n)$ nodes in a trie to keep track of functions.

However, although the sets are ordered, membership tests are linear in the depth of the trie because we do not have random access to the nodes along the path.

\subsection{Indexed Trie}

A \emph{splay tree}~\cite{DBLP:journals/jacm/SleatorT85} is a variant of a binary search tree that is self-adjusting.
Instead of maintaining a fixed structure, it rotates an accessed node to the root through a process called \emph{splaying}.
This operation moves frequently accessed nodes closer to the root, improving access time for those elements over time.

A \emph{link-cut tree}~\cite{DBLP:journals/jcss/SleatorT83} is a data structure that supports dynamic trees, allowing efficient access to and updates of tree paths.
It uses a series of splay trees to represent a forest of rooted trees, enabling operations such as path queries and tree merges to be performed in \emph{amortized} $\mathcal{O}(\log n)$ time.

Returning to the ordered trie, we \emph{index} its paths using \emph{auxiliary splay trees}, similar to a link-cut tree (see \crefrange{fig:indexedtrie}{fig:auxtrees}).
We never require the cut operation and only link nodes incrementally.

The trie is decomposed into \emph{preferred paths} by assigning each node at most one \emph{preferred child}, namely the child in the most recently accessed subtree.
Preferred paths change dynamically with access patterns.
Each preferred path is represented by an auxiliary splay tree whose nodes are keyed by depth in the trie.
These auxiliary trees allow efficient navigation along paths by skipping large portions of them.
To reconnect the auxiliary trees into the original trie structure, we use \emph{path-parent} pointers.
For each preferred path, exactly one node carries such a pointer: the node that is currently the root of the corresponding auxiliary splay tree.
This node stores a pointer to its parent in the trie, which lies on a different preferred path.

This design reduces \emph{membership test} complexity from linear to amortized logarithmic time.
It also provides an opportunity to short-circuit insertions (removals), achieving logarithmic time whenever the element is already present (absent).

\begin{figure}[t]
\begin{lstlisting}[language=Python,morekeywords={True,False},basicstyle=\tiny\ttfamily]
while not n1.is_root() and not n2.is_root():
    if n1.key < n2.key:
        # Search for n2.key in n1's splay structure
        n1 $\shortleftarrow$ n1.find(n2.key) # Find n2.key or the element just greater than that
        if n1.key $=$ n2.key:
            return True
        n1 $\shortleftarrow$ n1.parent # Move upward
    else:
        # Symmetric case: search for n1.key in n2's splay structure
return False
\end{lstlisting}
\vspace{-2ex}
\caption{Intersection test for two sets organized in an indexed trie}
\label{fig:intersect}
\end{figure}

\emph{Intersection tests} are required to implement \cref{subst:intersect}.
Although they remain linear in the worst case, the use of a link-cut tree enables several optimizations.
First, we can detect shared prefixes in logarithmic time via a least common ancestor query.
If one of the sets is small (16 elements or fewer---see \cref{sec:impl}), we perform at most 16 membership tests.
Otherwise, we execute an alternating splay search:
at each step, the current element of one set is located in the other via its splay structure, after which the roles are swapped.
This symmetric strategy quickly discards large disjoint regions (see \cref{fig:intersect}).
This method achieves a runtime of $\mathcal{O}(\log n + \log m)$ when the number of skipped elements dominates.

\subsubsection{Implementation}
\label{sec:impl}

Our implementation distinguishes four cases:
empty sets are represented by a null pointer;
singleton sets use a pointer that directly references the single element;
small sets with up to 16 elements are stored as hash-consed, ordered arrays;
and larger sets are represented using an indexed trie.
Empirically, we found that this threshold (16 elements) marks the point at which the indexed trie outperforms the naive array-based implementation.
We again use arena allocation for all trie nodes.

\mimir maintains two global, immutable data structures: one for variables and one for functions.
\emph{All variable and function sets reference these data structures.}
For example, all occurrences of the set $\{\Var{f}, \Var{hi}\}$ in \cref{fig:fvs} refer to the same memory location.

The ordering of trie nodes is arbitrary;
therefore, we assign each node an increasing counter as its sort key.
This design enables an additional optimization: nodes without a sort key have not yet been inserted into the trie and can thus be discarded quickly during operations such as membership tests.
Moreover, each trie node records the minimum key along the path to the root, enabling further short‑circuiting operations.
Incrementing the counter only when necessary also helps keep the trie structure flat and efficient.

%% file: fig/trie.tex
\begin{figure}[t]
	\begin{subcaptionblock}{.25\textwidth}
		\begin{tikzpicture}
			\node[m] (ro) { $\ocircle$ };

			\node[m] (b)  [below=of ro] {$b_1$};
			\node[m] (a)  [ left=of b]  {$a_1$};
			\node[m] (c)  [right=of b]  {$c_1$};
			\node[m] (d)  [right=of c]  {$d_1$};

			\node[m] (ax) [below=of a ] { };
			\node[m] (ad) [ left=of ax] {$d_2$};
			\node[m] (ac) [ left=of ad] {$c_2$};
			\node[m] (ab) [ left=of ac] {$b_2$};

			\node[m] (bd) [below=of b ] {$d_3$};
			\node[m] (bc) [ left=of bd] {$c_3$};

			\node[m] (cd) [below=of c] {$d_4$};

			\node[m] (abd) [below=of ab ] {$d_5$};
			\node[m] (abc) [ left=of abd] {$c_4$};

			\node[m] (acd) [below=of ac] {$d_6$};

			\node[m] (bcd) [below=of bc] {$d_7$};

			\node[m] (abcd) [below=of abc] {$d_8$};

			\draw (ro) -- (a);
			\draw (ro) -- (b);
			\draw (ro) -- (c);
			\draw (ro) -- (d);

			\draw (a) -- (ab);
			\draw (a) -- (ac);
			\draw (a) -- (ad);

			\draw (b) -- (bc);
			\draw (b) -- (bd);

			\draw (c) -- (cd);

			\draw (ab) -- (abc);
			\draw (ab) -- (abd);

			\draw (ac) -- (acd);

			\draw (bc) -- (bcd);

			\draw (abc) -- (abcd);
		\end{tikzpicture}
        \vspace{-3ex}
		\caption{Ordered trie}
		\label{fig:orderedtrie}
	\end{subcaptionblock}
	\begin{subcaptionblock}{.25\textwidth}
		\begin{tikzpicture}
			\node[m] (ro) { $\ocircle$ };

			\node[m] (b)  [below=of ro] {$b_1$};
			\node[m] (a)  [ left=of b]  {$a_1$};
			\node[m] (c)  [right=of b]  {$c_1$};
			\node[m] (d)  [right=of c]  {$d_1$};

			\node[m] (ax) [below=of a ] { };
			\node[m] (ad) [ left=of ax] {$d_2$};
			\node[m] (ac) [ left=of ad] {$c_2$};
			\node[m] (ab) [ left=of ac] {$b_2$};

			\node[m] (bd) [below=of b ] {$d_3$};
			\node[m] (bc) [ left=of bd] {$c_3$};

			\node[m] (cd) [below=of c] {$d_4$};

			\node[m] (abd) [below=of ab ] {$d_5$};
			\node[m] (abc) [ left=of abd] {$c_4$};

			\node[m] (acd) [below=of ac] {$d_6$};

			\node[m] (bcd) [below=of bc] {$d_7$};

			\node[m] (abcd) [below=of abc] {$d_8$};

			\draw[very thick] (ro) -- (a);
			\draw (ro) -- (b);
			\draw (ro) -- (c);
			\draw (ro) -- (d);

			\draw[very thick] (a) -- (ab);
			\draw (a) -- (ac);
			\draw (a) -- (ad);

			\draw[very thick] (b) -- (bc);
			\draw (b) -- (bd);

			\draw[very thick] (c) -- (cd);

			\draw[very thick] (ab) -- (abc);
			\draw (ab) -- (abd);

			\draw[very thick] (ac) -- (acd);

			\draw[very thick] (bc) -- (bcd);

			\draw[very thick] (abc) -- (abcd);
		\end{tikzpicture}
        \vspace{-3ex}
		\caption{Indexed trie}
		\label{fig:indexedtrie}
	\end{subcaptionblock}
	\begin{subcaptionblock}{.25\textwidth}
		\begin{tikzpicture}
			\node[m] (a1) { $a_1$ };

			\node[m] (xx)  [below     =of a1] {};
			\node[m] (d6)  [ left     =of xx] {$d_6$};
			\node[m] (c4)  [ left     =of d6] {$c_4$};
			\node[m] (d2)  [right     =of xx] {$d_2$};
			\node[m] (ro)  [right     =of d2] {$\ocircle$};

			\node[m] (d8) [below  left=of c4] {$d_8$};
			\node[m] (b2) [below right=of c4] {$b_2$};
			\node[m] (c2) [below right=of d6] {$c_2$};
			\node[m] (d1) [below  left=of ro] {$d_1$};
			\node[m] (c3) [below      =of ro] {$c_3$};
			\node[m] (d4) [below right=of ro] {$d_4$};

			\node[m] (d5) [below      =of b2] {$d_5$};
			\node[m] (d7) [below  left=of c3] {$d_7$};
			\node[m] (b1) [below right=of c3] {$b_1$};
			\node[m] (c1) [below right=of d4] {$c_1$};

			\node[m] (d3) [below      =of b1] {$d_3$};

			\draw[very thick] (c4) -- (a1);
			\draw[very thick] (ro) -- (a1);
			\draw[a]          (d6) -- (a1);
			\draw[a]          (d2) -- (a1);

			\draw[very thick] (d8) -- (c4);
			\draw[very thick] (b2) -- (c4);
			\draw[very thick] (c2) -- (d6);
			\draw[a         ] (d1) -- (ro);
			\draw[a         ] (c3) -- (ro);
			\draw[a         ] (d4) -- (ro);

			\draw[a         ] (d5) -- (b2);
			\draw[very thick] (d7) -- (c3);
			\draw[very thick] (b1) -- (c3);
			\draw[a         ] (d3) -- (b1);
			\draw[very thick] (c1) -- (d4);
		\end{tikzpicture}
        \vspace{-3ex}
		\caption{Auxiliary trees}
		\label{fig:auxtrees}
	\end{subcaptionblock}
	\begin{subcaptionblock}{.2\textwidth}
		\begin{tikzpicture}
			\node[o] (p1)               {};
			\node[o] (p2) [right=of p1] {};
			\node[o] (p3) [right=of p2,xshift=-15pt] {preferred edge};

			\node[o] (n1) [below=of p1] {};
			\node[o] (n2) [right=of n1] {};
			\node[o] (n3) [right=of n2,xshift=-15pt] {normal edge};

			\node[o] (x1) [below=of n1] {};
			\node[o] (x2) [right=of x1] {};
			\node[o] (x3) [right=of x2,xshift=-15pt] {path parent};

			\draw[very thick] (p1) -- (p2);
			\draw             (n1) -- (n2);
			\draw[a]          (x1) -- (x2);
		\end{tikzpicture}
		\vspace{1cm}
	\end{subcaptionblock}
    \vspace{-2ex}
	\caption{Indexed trie: ordered trie where paths are indexed with auxiliary splay trees}
	\label{fig:trie}
\end{figure}

%% file: eval.tex
\section{Evaluation}
\label{sec:eval}

\input{fig/bench}

All the concepts introduced in this paper have been implemented in C++ as part of the \mimir framework.

\paragraph{Benchmarks}

To evaluate the scalability of our algorithms and data structures, we consider three synthetic benchmarks specifically constructed to stress-test their asymptotic behavior:
\begin{enumerate*}
	\item \emph{Loop Cascade} consists of $n$ consecutive counting loops, where each loop’s bound depends on the final counter value of the previous loop.
	      The loops are expressed in \ac{CPS} (similar to \cref{fig:loops}) and embedded in a \ac{CPS} function that provides the initial bound and a return continuation, which is invoked at the end with the last loop’s counter.
	      While not semantically meaningful, this benchmark captures the data-dependency patterns typical of sequential computations where each step depends on the result of the previous one.
	\item \emph{Accumulating Loop Cascade} extends this setup by summing all loop counters and passing the accumulated result to the return continuation.
          This modification keeps all loop counters live until the end, requiring their propagation as free variables from the final continuation back to their loop headers.
          This is the worst-case scenario for programs without loop nests:
          we introduce a number of variables proportional to the program size and keep them alive until the end.
	\item \emph{Loop Nest} is similar to \emph{Loop Cascade}, but all loops are nested rather than consecutive.
	      As a result, all loop counters must be propagated as free variables through every enclosing loop up to their defining headers.
          Thus, all variables must be propagated as in the \emph{Acc.~Loop Cascade}.
          This benchmark represents the worst case: it introduces a number of nested loops and variables proportional to the program size.
\end{enumerate*}

\paragraph{Algorithms}

For each program, we measured the execution time of the following operations:
\begin{enumerate*}
	\item Computation of \emph{free variables} for all functions in the generated program.
	\item A \emph{$\beta$-reduction} on the container function using a dummy value, producing a new function in which all occurrences of the original variable are replaced by the dummy.
        The free variables have already been computed in the previous step and are already cached.
        This operation particularly stresses the set intersection routines.
	\item Construction of the \emph{nesting tree}, including sibling dependencies, and \acp{SCC}.
\end{enumerate*}

\paragraph{Data Structures}

To compare our custom indexed trie\footnote{%
    Including the optimizations discussed in \cref{sec:impl}; in particular, we use ordered, hash-consed arrays for small sets.%
}
used for tracking sets of variables and functions, we also implemented a reference implementation that uses \lst|std::set| instead.
In addition, we ported \mimir to use the C++ library \immer~\cite{DBLP:journals/pacmpl/Puente17} (version 0.8.1).
\immer provides persistent, immutable data structures, including a persistent set type that we employed as an alternative to the indexed trie.
It also supports transient (mutable) operations that allow efficient batch updates before reverting to a persistent form---a feature we use extensively, for example, during free-variable computations.

\paragraph{Comparison with \llvm}

We compiled all \mimir benchmark programs to \llvm and compared \llvm's dominance with \mimir's free-variable analysis, as well as \llvm's inliner with \mimir's $\beta$-reduction.
We emphasize that these comparisons are not apples-to-apples: \mimir's IR is strictly more expressive than \llvm's, supporting first-class functions, polymorphism, dependent types, and plugin-defined abstract data types---which necessarily increases both \ac{IR} size and traversal cost.

\paragraph{Setup}

All experiments were conducted on a CachyOS system equipped with an AMD Ryzen AI 9 HX PRO 370 (12 physical cores, 24 logical processors, up to \SI{5.16}{GHz} boost) and \SI{32}{GiB} LPDDR5-7500\,MT/s memory (effective clock ${\sim}\SI{3750}{MHz}$).
CPU frequency scaling was disabled (\verb|performance| governor).
All measurements were single-threaded.
Each benchmark ran on an otherwise idle system, with processes pinned to a single core via \verb|taskset|.
All C++ code was compiled with GCC 15.2.1 using \verb|-O3 -march=native -DNDEBUG|.
Timings were measured via \lst|std::chrono::steady_clock|.
We measured \llvm using \lst|opt| (version 22.1.1) by recording wall-clock time.
These measurements are less precise than those obtained for \mimir.
Consequently, the lower end of the \llvm plots is either missing or appears noisy.

\subsection{Discussion}

\cref{fig:bench} summarizes the results in double logarithmic scale.
Each row corresponds to one benchmark strategy.
The x-axis shows the number of generated loops, doubling at each tick.
Each column corresponds to one of the measured operations.
The y-axis reports the median execution time in microseconds (\si{\micro\second}) across nine runs, following a warm-up of three runs.
The shaded regions indicate the observed range of measurements.

The indexed trie outperformed \immer in most cases, except for nesting tree construction on \emph{Loop Nest}, where \immer performed better at larger input sizes.
Often, the indexed trie was about twice as fast and, in some cases, around four times faster.
Moreover, \immer exhibited significantly higher memory consumption, causing out-of-memory termination of the \emph{Acc.~Loop Cascade} benchmark for programs exceeding $2^{13}$ loops.
The \lst|std::set| implementation roughly tracked \immer's performance in about half of the cases, but was significantly slower in the other half.
Because \lst|std::set| does not share any data, not only \emph{Acc.~Loop Cascade} but also \emph{Loop Cascade} ran out of memory for input sizes exceeding $2^{13}$ loops.
The indexed trie delivers the best overall performance; therefore, we focus on this data structure for the remainder of this discussion.

The measured \mimir operations exhibit $n \log n$ scaling for both \emph{Loop Cascade} and \emph{Acc.~Loop Cascade}.
\emph{Loop Nest} produces programs whose loop connectedness grows proportionally with program size:
Each iteration of the fixed-point computation for free variables propagates through only a single nesting level, the indexed trie increases in depth, and nesting tree construction must traverse increasingly deep hierarchies to locate parent nodes.
For smaller inputs, performance is dominated by hash-consed arrays; the asymptotic behavior becomes apparent only for larger instances.
Empirically, the observed behavior at the high end is consistent with $\mathcal{O}(n^3 \log^3 n)$ scaling for free-variable computation and $\mathcal{O}(n^2 \log^2 n)$ scaling for $\beta$-reduction and nesting tree construction.
To aid visual comparison, we include faint reference curves; these are shifted along the x-axis in the last row to align with the measurements.
Due to this growth, we limited the \emph{Loop Nest} benchmarks to $2^{11}$ loops.
In this case, computing the free variables took~${\sim}\SI{3}{\minute}$.

Deeply nested loops exceeding hundreds or thousands of levels are essentially absent in real-world software.
For instance, within the entire SPEC2006 benchmark suite, the maximum observed loop-nesting depth is only~22 \cite[see][Table~1]{Nie2021LoopSelection}.
We therefore conclude that the algorithms and data structures presented in this paper scale efficiently to programs with millions of data dependencies, except in contrived cases with unrealistically deep nesting.

\llvm's dominance analysis is several times faster than \mimir's free-variable computation, as expected:
\llvm uses a near linear algorithm that does not depend on the loop connectedness;
for this reason, \llvm's analysis remains efficient even for large \emph{Loop Nests}.
\mimir's free-variable computation, on the other hand, naturally extends to higher-order programs and is highly local and incremental, whereas \llvm must recompute the dominator tree on a per-function basis whenever the \ac{CFG} changes.
However, it is difficult to measure how well these advantages outweigh a much faster dominance analysis.
\mimir's free‑variable analysis can still be considered fast in absolute terms.
For reference, computing free variables for more than three million functions comprising over one million ($2^{20}$) loops takes less than~\SI{4}{\second}.

Despite operating on a richer \ac{IR}, $\beta$-reduction in \mimir is consistently faster than inlining in \llvm with the exception of large \emph{Loop Nests}.
Furthermore, \mimir's \ac{SoN} representation gives rise to several optimizations:
constant folding, arithmetic simplification, semi-global value numbering via hash-consing\footnote{Semi-global in the sense that expressions may float across function/basic-block boundaries while variables remain distinct.}, as well as dead-code and unreachable-block elimination through graph traversal.
To approximate these effects in \llvm, we additionally ran \texttt{instcombine}, \texttt{early-cse}, \texttt{dce}, and \texttt{unreachableblockelim} after inlining.
Applying these optimizations further widens the gap.

\pgfplotstableread[header=true]{data/regex.dom.merged}\datadom
\pgfplotstableread[header=true]{data/regex.inl.merged}\datainl
\pgfplotstableread[header=true]{data/regex.opt.merged}\dataopt
\pgfplotstableread[header=true]{data/trie.fvs.regex.merged}\datafvs
\pgfplotstableread[header=true]{data/trie.beta.regex.merged}\databeta

\pgfplotstablegetelem{0}{median}\of{\datadom}  \edef\vardom{\pgfplotsretval}
\pgfplotstablegetelem{0}{median}\of{\datainl}  \edef\varinl{\pgfplotsretval}
\pgfplotstablegetelem{0}{median}\of{\dataopt}  \edef\varopt{\pgfplotsretval}
\pgfplotstablegetelem{0}{median}\of{\datafvs}  \edef\varfvs{\pgfplotsretval}
\pgfplotstablegetelem{0}{median}\of{\databeta} \edef\varbeta{\pgfplotsretval}

To evaluate how \mimir and \llvm compare on real-world programs with large control flow graphs, we used \mimir's \texttt{regex} plugin to compile a large regular expression matching a single HTTP access-log line, incorporating extensive enumerations of HTTP methods, status codes, timestamps, and path characters.
The resulting matcher contains 1058 continuations in \mimir and the same number of basic blocks when compiled to \llvm.
Free-variable analysis took \SI{\varfvs}{\micro\second}, while \llvm's dominance analysis completed in ${\sim}\SI{\vardom}{\micro\second}$;
$\beta$-reduction required \SI{\varbeta}{\micro\second}, whereas inlining took ${\sim}\SI{\varinl}{\micro\second}$ (${\sim}\SI{\varopt}{\micro\second}$ with optimizations).
These numbers confirm that \llvm's dominance analysis is several times faster than \mimir's free-variable analysis while $\beta$-reduction is competitive.

%% file: fig/bench.tex
\newcommand{\myplot}[3]{%
    \IfFileExists{#3}{%
        \addplot[name path=upper, draw=none, forget plot] table [x index=0, y index=2, col sep=space] {#3}; 
        \addplot[name path=lower, draw=none, forget plot] table [x index=0, y index=3, col sep=space] {#3}; 
        \addplot[fill=#1!20, forget plot] fill between[of=upper and lower];
        \addplot[#1, #2, mark options={scale=0.5}] table [x index=0, y index=1, col sep=space] {#3};
    }{}%
}
\colorlet{ACMPurple2}{ACMPurple!50}
\begin{figure}[p]
	\begin{tikzpicture}
		\begin{groupplot}[
				group style={
						group size=3 by 3,
						horizontal sep=0cm,
						vertical sep=0cm,
						xlabels at=edge bottom,
						ylabels at=edge left,
						xticklabels at=edge bottom,
						yticklabels at=edge left,
					},
                title style={font=\footnotesize},
                tick align=outside,
                tick label style={font=\footnotesize, inner sep=0pt},
				width=.42\textwidth,   
				height=0.51\textwidth,
				log basis x={2},
				log basis y={10},
				grid=both,
				title style={yshift=-1.5ex},
                samples=100,
                domain=1:1048576,
                xmode=log,
				xmin=1, xmax=1048576, 
				xtickten={4,8,...,20}, 
				xticklabel={\(2^{\pgfmathprintnumber{\tick}}\)},
				extra x ticks={1,2,4,8,16,32,64,128,256,512,1024,2048,4096,8192,16384,32768,65536,131072,262144,524288,1048576}, 
				extra x tick style={grid style={gray!15}, 
						tick label style={opacity=0}, 
					},
                ymode=log,
                ylabel style={font=\footnotesize,yshift=-2.5ex},
				ymin=1, ymax=1000000000, 
				ytickten={3,6,...,12}, 
				yticklabel={\(10^{\pgfmathprintnumber{\tick}}\)},
				extra y ticks/.expanded={
                        1, 10, 100, 1000, 10000, 100000, 1000000, 10000000, 100000000, 1000000000
					},
				extra y tick style={
						grid style={gray!25},
						tick label style={opacity=0},
					},
                legend pos=north west,
                legend style={font=\footnotesize},
			]

			\nextgroupplot[title={Free Variables}, ylabel={Loop Cascade}]

            \addplot[ACMGreen!60, smooth, forget plot] {x * ln(x) / ln(2)};
            \label{nlogn}

			\myplot{ACMBlue}{mark=*,ultra thick}{data/trie.fvs.0.merged};
            \addlegendentry{Indexed Trie}

			\myplot{ACMOrange}{mark=square*}{data/immer.fvs.0.merged};
            \addlegendentry{\immer}

			\myplot{ACMYellow}{mark=triangle*,thin}{data/set.fvs.0.merged};
            \addlegendentry{\texttt{std::set}}

			\myplot{ACMPurple}{densely dashed,mark=diamond*}{data/0.dom.merged};
            \addlegendentry{\llvm dominance}

			\nextgroupplot[title={$\beta$-Reduction}]

            \addplot [ACMGreen!60, smooth] {x * ln(x) / ln(2)};

			\myplot{ACMBlue}{mark=*,ultra thick}{data/trie.beta.0.merged};
			\myplot{ACMOrange}{mark=square*}{data/immer.beta.0.merged};
			\myplot{ACMYellow}{mark=triangle*,thin}{data/set.beta.0.merged};
			\myplot{ACMPurple2}{densely dashed,mark=diamond*}{data/0.inl.merged};
            \label{llvm-inl}
			\myplot{ACMPurple}{densely dashed,mark=diamond*}{data/0.opt.merged};
            \label{llvm-opt}

            \node [draw,fill=white] at (rel axis cs: 0.39,0.9) {\shortstack[l]{
            \footnotesize\ref{llvm-opt} \llvm inline + opt \\
            \footnotesize\ref{llvm-inl} \llvm inline}};

			\nextgroupplot[title={Nesting Tree}]

            \addplot [ACMGreen!60, smooth] {x * ln(x) / ln(2)};

			\myplot{ACMBlue}{mark=*,ultra thick}{data/trie.nest.0.merged};
			\myplot{ACMOrange}{mark=square*}{data/immer.nest.0.merged};
			\myplot{ACMYellow}{mark=triangle*,thin}{data/set.nest.0.merged};

			\nextgroupplot[ylabel={Acc. Loop Cascade}]

            \addplot [ACMGreen!60, smooth] {x * ln(x) / ln(2)};

			\myplot{ACMBlue}{mark=*,ultra thick}{data/trie.fvs.1.merged};
			\myplot{ACMOrange}{mark=square*}{data/immer.fvs.1.merged};
			\myplot{ACMYellow}{mark=triangle*,thin}{data/set.fvs.1.merged};
			\myplot{ACMPurple}{densely dashed,mark=diamond*}{data/1.dom.merged};

			\nextgroupplot

            \addplot [ACMGreen!60, smooth] {x * ln(x) / ln(2)};

			\myplot{ACMBlue}{mark=*,ultra thick}{data/trie.beta.1.merged};
			\myplot{ACMOrange}{mark=square*}{data/immer.beta.1.merged};
			\myplot{ACMYellow}{mark=triangle*,thin}{data/set.beta.1.merged};
			\myplot{ACMPurple2}{densely dashed,mark=diamond*}{data/1.inl.merged};
			\myplot{ACMPurple}{densely dashed,mark=pentagon*}{data/1.opt.merged};

			\nextgroupplot

            \addplot[ACMGreen!60, smooth] {x * ln(x) / ln(2)};

			\myplot{ACMBlue}{mark=*,ultra thick}{data/trie.nest.1.merged};
			\myplot{ACMOrange}{mark=square*}{data/immer.nest.1.merged};
			\myplot{ACMYellow}{mark=triangle*,thin}{data/set.nest.1.merged};

			\nextgroupplot[ylabel={Loop Nest}]

            \addplot[ACMRed!40, smooth,domain=1:1024,xshift=29] {(x * ln(x) / ln(2))^3};
            \label{nlogn3}

			\myplot{ACMBlue}{mark=*,ultra thick}{data/trie.fvs.2.merged};
			\myplot{ACMOrange}{mark=square*}{data/immer.fvs.2.merged};
			\myplot{ACMYellow}{mark=triangle*,thin}{data/set.fvs.2.merged};
			\myplot{ACMPurple}{densely dashed,mark=diamond*}{data/2.dom.merged};

			\nextgroupplot

			\myplot{ACMBlue}{mark=*,ultra thick}{data/trie.beta.2.merged};
			\myplot{ACMOrange}{mark=square*}{data/immer.beta.2.merged};
			\myplot{ACMYellow}{mark=triangle*,thin}{data/set.beta.2.merged};
			\myplot{ACMPurple2}{densely dashed,mark=diamond*}{data/2.inl.merged};
			\myplot{ACMPurple}{densely dashed,mark=pentagon*}{data/2.opt.merged};

            \plot [ACMOrange!40, smooth,domain=1:2048,xshift=30] {(x * ln(x) / ln(2))^2};
            \label{nlogn2}

			\nextgroupplot

            \addplot[ACMOrange!40, smooth,domain=1:4096,xshift=33] {(x * ln(x) / ln(2))^2};

			\myplot{ACMBlue}{mark=*,ultra thick}{data/trie.nest.2.merged};
			\myplot{ACMOrange}{mark=square*}{data/immer.nest.2.merged};
			\myplot{ACMYellow}{mark=triangle*,thin}{data/set.nest.2.merged};

            \node [draw,fill=white] at (rel axis cs: 0.75,0.175) {\shortstack[l]{
            \footnotesize\ref{nlogn} $n \log_2 n$ \\
            \footnotesize\ref{nlogn2} $n^2 \log_2^2 n$ \\
            \footnotesize\ref{nlogn3} $n^3 \log_2^3 n$}};

		\end{groupplot}
        \node[anchor=north east, rotate=90, align=center]
        at ($(group c1r3.south west) + (-2.5em,4em)$) {\footnotesize \si{\micro\second} (microsec.) $\to$};
        \node[anchor=north, align=center]
        at ($(group c1r3.south west) + (1em,-1em)$) {\footnotesize \# loops $\to$};
	\end{tikzpicture}
    \vspace{-6ex}
    \caption{%
        Performance (in \si{\micro\second}, lower is better) of three operations---free variable computation, $\beta$-reduction, and nesting tree computation---on three synthetic programs with increasing numbers of loops.
        The plots compare three different set implementations: the indexed trie, \immer, and \texttt{std::set}.
        Additionally, free variable computation is compared against LLVM's dominance analysis;
        $\beta$-reduction is compared against LLVM's inliner (both with and without a sequence of \texttt{instcombine}, \texttt{early-cse}, \texttt{dce}, \texttt{and unreachableblockelim} to approximate \mimir's on-the-fly \acl{SoN} effects).
        For reference, \SI{e3}{\micro\second} = \SI{1}{\milli\second}, \SI{e6}{\micro\second} = \SI{1}{\second}.%
    }
    \label{fig:bench}
\end{figure}

%% file: relwork.tex
\section{Related Work}
\label{sec:relwork}

\paragraph{Binders}

\lamG identifies a function with the variable it introduces, so both share the same identifier.
This removes the need to synchronize separate representations and avoids an entire class of bookkeeping issues.
Still, as \citet{cockx2021syntax} observes,
\enquote{there are countless different techniques, frameworks, and meta-languages for dealing with name binding, none of which come close to be universally accepted or clearly superior to the others.}
Because \lamG is a graph-based representation that supports mutation and (mutual) recursion via its cyclic structure rather than through an explicit fixed-point operator, many of the alternatives discussed by \citeauthor{cockx2021syntax}, such as (Co-)De Bruijn indices, are not suitable.

\paragraph{SSA Form}

\ac{SSA} form was introduced by \citet{10.1145/73560.73562} and achieved widespread adoption in compiler design following the publication of an efficient construction algorithm by \citet{DBLP:journals/toplas/CytronFRWZ91}.
\Cref{prop:dom} is not stated as an explicit constraint; rather, it is guaranteed by that construction.
Later work~\cite[e.g.][]{DBLP:journals/spe/BriggsCHS98,DBLP:conf/cc/HackGG06} formalized this guarantee as the \emph{strict \ac{SSA} property}.
Without it, a use of a variable may appear in a block that is \emph{not} dominated by its definition, causing standard analyses and transformations to break down.
Consequently, modern \acp{IR} such as \llvm~\cite{DBLP:conf/cgo/LattnerA04} and \mlir~\cite{DBLP:conf/cgo/LattnerAB0DPRSV21} treat dominance as integral to \ac{SSA} well-formedness.

\mlir introduces a degree of higher-order behavior through regions that can be passed syntactically to operations specifically designed to consume regions.
This complicates dominator-tree construction~\cite{BhatNiu2023ExtendingDominanceToMLIRRegions}.
\mlir maintains a purely structural representation: regions resemble functions, may contain blocks, and can refer to free variables from enclosing scopes.
Accordingly, \mlir embeds each region's dominator tree into the parent \ac{CFG} hierarchy.
This may overapproximate data dependencies and place the region's dominator tree unnecessarily deep.
In contrast, our nesting tree computes the \emph{minimal} required nesting purely from free variables, even in higher-order settings.
Furthermore, \mlir is not a faithful higher-order representation, as it cannot directly represent the program in \cref{fig:iter} without at least partially lowering the program.

\paragraph{Dominance in SSA}

While efficient algorithms exist to compute the dominator tree~\cite{DBLP:journals/toplas/LengauerT79,cooper06}, the tree is highly brittle: any change to the \ac{CFG} requires recomputation.
Since the dominator tree is ubiquitous in \ac{SSA}-based compilers, it must be frequently rebuilt whenever the \ac{CFG} changes.
This led compiler engineers to implement incremental updates to the dominator tree, avoiding full per-function recomputation~\cite{Kuderski2017DominatorTrees}.
Engineers must still manually track \ac{CFG} mutations and keep the framework in sync.
Our work sidesteps this entirely by asking how functions are \emph{nested} and automatically maintaining free-variable sets in a transparent way, without manual intervention.

While this paper advocates free variables rather than dominance as the primary structural notion, dominance remains valuable in some contexts (e.g., classical \ac{SSA} construction before \ac{SSA} variables exist).
There are also algorithms that construct \ac{SSA} without dominance information~\cite{DBLP:conf/cc/BraunBHLMZ13,DBLP:journals/pacmpl/Lemerre23}.

\paragraph{Explicit Scoping}

Schneider's verified compiler LVC~\cite{DBLP:phd/dnb/Schneider18d} recovers SSA-style benefits in the lexically-scoped term language IL by replacing dominance with a syntactic \emph{coherence} predicate that guarantees agreement between its imperative (IL/I) and functional (IL/F) semantics.
\citet{DBLP:journals/pacmpl/Lemerre23} \enquote{showed that we can also have both interpretations on a graph-based language without an explicit syntactic scope}.
We are currently working on an \ac{SSA} construction in \mimir that extends \citeauthor{DBLP:journals/pacmpl/Lemerre23}'s work for higher-order programs.

\citet{DBLP:conf/icfp/Kennedy07} presents a \ac{CPS}-based \ac{IR} that represents the term structure as a doubly-linked tree, in which each subterm maintains an explicit up-link to its immediately enclosing parent.
This design enables constant-time replacements, deletions, and insertions.
Variable uses are maintained in a doubly-linked circular list, with same-binder equivalence classes computed via a union-find structure, enabling near-constant-time substitution when replacing one variable with another.
However, the representation remains explicitly scoped and may still require block floating (see \cref{sec:intro}).
Moreover, free-variable computation requires global re-analysis after every transformation.
Given the destructive nature of the updates, there is no opportunity to precompute and incrementally maintain partial results as our framework does.

\paragraph{Sea of Nodes}

A \ac{SoN}-style \ac{IR}~\cite{DBLP:journals/toplas/ClickC95} unifies data-flow and control-flow dependencies into a single graph-based representation, eliminating the rigid instruction ordering found in traditional \ac{CFG}-oriented \acp{IR}.
This approach has roots in the global value graph~\cite{DBLP:journals/jcss/ReifL86} and cyclic term graph~\cite{DBLP:journals/fuin/AriolaK96}.
It enables aggressive global optimizations by representing operations as nodes constrained only by their dependencies.
\citet{DBLP:conf/cc/DemangeRP18} describe a formal semantics for \ac{SoN}.

\paragraph{Soup of Functions}

\thorin~\cite{DBLP:conf/cgo/LeissaKH15} pioneered a scopeless, higher-order \ac{CPS} \ac{IR} with primitive operations in direct style, while all expressions are managed in a \ac{SoN}.
\mimir evolved from \thorin, retaining its higher-order \ac{SoN} representation, and extends it to support direct-style functions, polymorphism, dependent types, and a plugin system~\cite[e.g.][]{DBLP:conf/cc/UllrichHL25}.
It also served as the implementation platform for the concepts presented in this paper.
\enquote{CPS soup}~\cite{wingolog2023approaching} in \guile is a \ac{CPS}-based \ac{IR} that uses integers as labels for all continuations and maintains a flat, persistent map from labels to continuations, similar to a \lamG program.
\thorin and \mimir instead use the memory addresses of functions as labels.
In contrast to \guile and \thorin, \lamG and \mimir also support direct style.
Furthermore, unlike \mimir, \guile represents non-continuation expressions using traditional expression trees.
\thorin, \guile, and \mimir (prior to our work) overlay the soup of functions with a derived \ac{CFG} (similar to \irefrule{Succ}) to compute properties such as dominance.
This derived \ac{CFG}, however, must be restricted to the \emph{reachable and live} subset of the program, since neither unused continuations nor external callees (e.g., \lst|printf|) should be treated as \ac{CFG} nodes.
For this reason, these frameworks typically perform dedicated reachability and liveness analyses before applying $\beta$-reduction or similar transformations.
Moreover, dominance still overapproximates the actual data dependencies, as discussed in \cref{sec:overview}.
This paper advocates a paradigm shift by instead relying on free-variable information and avoiding a \ac{CFG} overlay.
In our setting, classical liveness corresponds exactly to free-variable information \cite[p.~22]{DBLP:books/cu/Appel1992}.
Put another way, our framework tracks the liveness of variables, where each variable corresponds to the set of $\phi$-functions introduced in the same basic block.
Our new $\beta$-reduction algorithm, for example, visits only reachable functions without a prior global analysis, and uses free-variable information to decide locally which expressions must be specialized.
We introduce a free-variable framework for \enquote{soup}-style \acp{IR}, develop a metatheory for this setting, and introduce nesting as a relaxed alternative to dominance.

\paragraph{Data Structures}

We need an immutable data structure to keep track of sets of variables and functions.
These data structures have a long history in functional programming and compiler implementation.
By preserving previous versions under update, they enable persistent program representations that support efficient incremental queries and safe sharing across transformations~\cite{DBLP:books/daglib/0097014}.
Techniques such as path copying, hash-consing, and trie-based structures provide logarithmic or even amortized constant-time updates while avoiding destructive mutation.
These properties make immutable data structures attractive for compilers and program analyses that require scalable maintenance of auxiliary information.

A link-cut tree~\cite{DBLP:journals/jcss/SleatorT83} is a lesser-known data structure that proved highly useful in this work.
Given the frequent occurrence of tree structures in compilers, we suspect that link-cut trees could be applied more broadly to accelerate various algorithms.
For instance, they have already been used to speed up navigation of relational algebra trees in databases~\cite{DBLP:journals/pvldb/FentM023}.

The sibling dependencies bear a strong resemblance to the join-edges in a DJ-graph~\cite{DBLP:conf/popl/SreedharG95}.
In both cases, additional edges enrich a tree (the dominator tree or, in our case, the nesting tree) to capture non-hierarchical relationships.
However, while the DJ-graph records control-flow joins, sibling dependencies capture data-flow relationships among siblings, making them applicable beyond first-order CFGs and suitable for higher-order or non-control-based \acp{IR}.

%% file: concl.tex
\section{Conclusion}
\label{sec:concl}

This paper revisits one of the central assumptions underlying \ac{SSA} form:
the reliance on dominance as the organizing principle for relating definitions and uses.
While dominance has proven effective for first-order \acp{CFG}, it becomes increasingly imprecise and, in some cases, ill-defined in the presence of higher-order abstractions.

To address this, we introduced \lamG, a graph-based intermediate representation that replaces \ac{CFG}-based dominance with nesting, a structural property derived from free-variable information.
By treating free variables as the primary source of structural information, we extend classical \ac{SSA}-style transformations, such as substitution and critical-edge elimination, to a higher-order, scopeless setting.
Furthermore, we showed that this paradigm shift is practical: using an incremental caching framework and an indexed trie data structure, free-variable queries, dependency-aware $\beta$-reduction, and nesting-tree construction scale log-linearly for the benchmarks representative of real-world workloads.

More broadly, this work suggests that control-flow structure need not be the primary lens through which compiler analyses are expressed.
By grounding program structure in data dependencies instead, we obtain a representation that more closely reflects the semantics of modern programming languages and enables optimizations and analyses that are difficult to express in traditional \ac{SSA} frameworks.